\titlespacing{\chapter}{0pt}{50pt}{2\baselineskip}
\newtheorem{defi}{Definition}[section]
\newcommand{\definition}[3]{\begin{mdframed}[roundcorner=6pt,linecolor=blue,backgroundcolor=blue!5]
\begin{defi}[#1]
\label{#2}
#3
\end{defi}
\end{mdframed}
}
\newtheorem{thm}{Theorem}[section]
\newenvironment{theorem}{\begin{mdframed}[roundcorner=6pt,linecolor=red,backgroundcolor=red!5]
\begin{thm}}{\end{thm}
\end{mdframed}}
\newtheorem{cor}{Corollary}[section]
\newenvironment{corollary}{\begin{mdframed}[roundcorner=6pt,linecolor=green,backgroundcolor=green!5]
\begin{cor}}{\end{cor}\end{mdframed}}
\newtheorem{mdl}{Model}[section]
\newenvironment{model}{\begin{mdframed}[roundcorner=6pt,linecolor=black,backgroundcolor=black!5]
\begin{mdl}}{\end{mdl}
\end{mdframed}}
\newtheorem{exampl}{Example}[section]
\newcommand{\example}[3]{\begin{mdframed}[roundcorner=6pt,linecolor=black,backgroundcolor=black!5]
\begin{exampl}[#1]
\label{#2}
#3
\end{exampl}
\end{mdframed}
}
\newcommand{\personagent}{\textit{person-agent}\xspace}
\newcommand{\pa}{\textit{pa}\xspace}
\newcommand{\interfaceagent}{\textit{interface-agent}\xspace}
\newcommand{\Personagent}{\textit{Person-agent}\xspace}
\newcommand{\Interfaceagent}{\textit{Interface-agent}\xspace}
\newcommand{\location}{\textit{location}\xspace}
\newcommand{\locationcollection}{\textit{location collection}\xspace}
\newcommand{\regionallevel}{\textit{regional-level}\xspace}
\newcommand{\regionfamily}{\textit{region-family}\xspace}
\newcommand{\regionmapping}{\textit{region-mapping}\xspace}
\newcommand{\regionid}{\textit{region-id}\xspace}
\newcommand{\regionallevels}{\textit{regional-levels}\xspace}
\newcommand{\regionfamilies}{\textit{region-families}\xspace}
\newcommand{\regionids}{\textit{region-ids}\xspace}
\colorlet{mycolor}{green!30}
\newcommand{\important}[1]{
\hl{#1}
}
\renewcommand{\important}[1]{#1}
\begin{document}

\selectlanguage{english}

\begin{titlepage}
	\clearpage\thispagestyle{empty}
	\centering
	\includegraphics[width=0.2\linewidth]{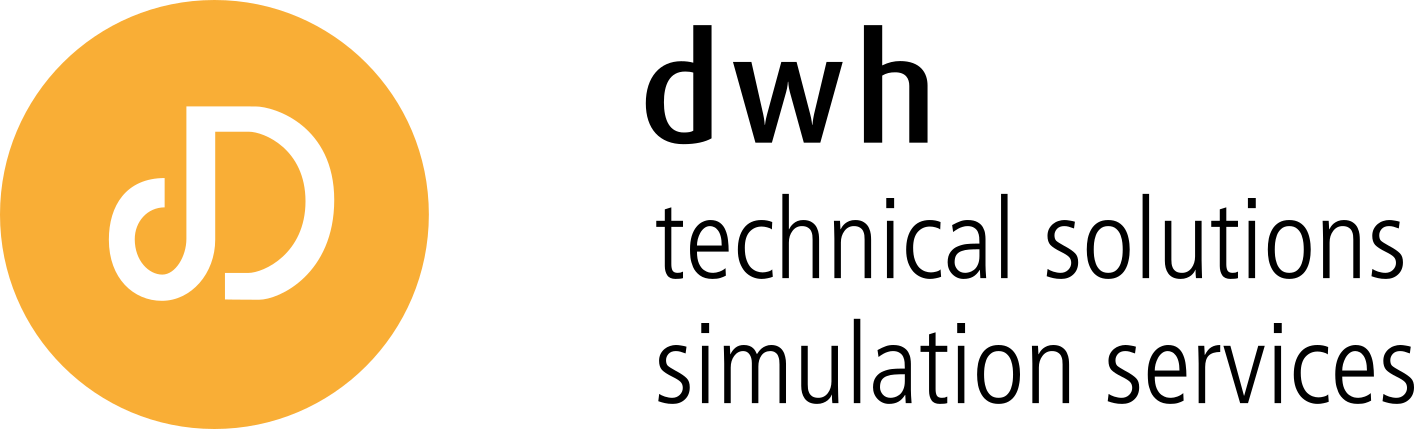}\hspace{1cm}
	\includegraphics[width=0.2\linewidth]{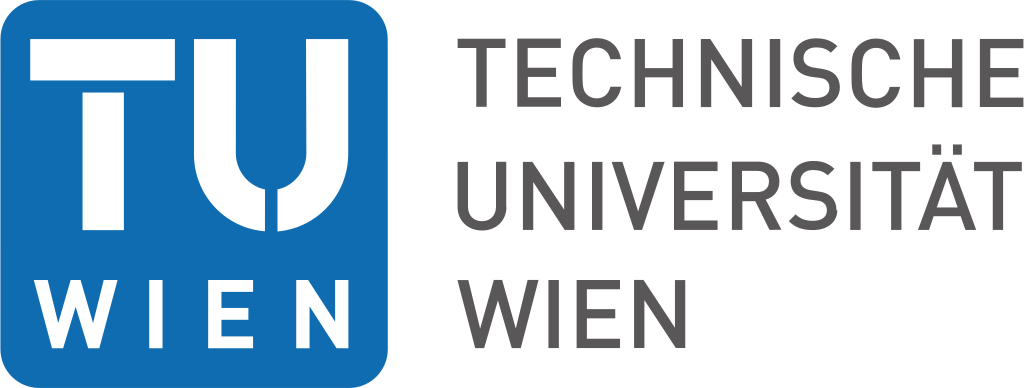}\\
    \vspace{0.5cm}
	\vspace{1cm}
	\rule{12cm}{1.2pt}\\
	\vspace{0.5cm}
	{\Huge \textbf{GEPOC ABM}} \\
 \vspace{0.1cm}
   {\Large \textbf{Generic Population Concept - Agent-Based Model}}\\
   {\Large \textbf{Version 2.2}}\\
	\vspace{0.5cm}
	{\large \textbf{Model Definition}} \\
	\vspace{1cm}
	{\large \today}\\
	\vspace{0.5cm}
    {\large
    Martin Bicher$^{1,2,*}$, Dominik Brunmeir$^{1,2}$, Claire Rippinger$^{1}$, Christoph Urach$^{1,3}$,
    Maximilian Viehauser$^{1,3}$, Daniele Giannandrea$^{1,3}$, Hannah Kastinger$^{1}$, Niki Popper$^{1,2,3}$\\
    }
    \vspace{0.5cm}
    {\small
    $^{1}$ dwh GmbH, dwh simulation services, Neustiftgasse 57--59, 1070 Vienna, Austria\\
    $^{2}$ TU Wien, Institute of Information Systems Engineering, Favoritenstra\ss e 9-11, 1040 Vienna, Austria\\
    $^{3}$ TU Wien, Institute of Statistics and Mathematical Methods in Economics, Wiedner Hauptstraße 8-10, 1040 Vienna, Austria\\
    $^{*}$Correspondence: martin.bicher@dwh.at; martin.bicher@tuwien.ac.at
    }\\[0.5em]
	\rule{12cm}{1.2pt}\\
    \vspace{2cm}
    {\normalsize
    \begin{abstract}
        The Generic Population Concept - Agent-Based Model, henceforth short, GEPOC ABM, is one of the models within GEPOC, a generic concept to model a country's population and its dynamics using causal modelling approaches. The model is well established and had already proven its worth in various use cases from evaluation of MMR vaccination rates to SARS-CoV-2 epidemics modelling. In this work we will reproducibly specify the base model, to be specific, version 2.2 of it, and several extensions. The base model GEPOC ABM depicts the population of a country with the features sex and age. It uses a co-simulation-inspired time-update, where person-level discrete-event simulators are synchronised by a simulation layer at macro-steps, making the approach amenable to parallelization. A core design choice is structuring person agents around the life-year rather than the calendar year; accordingly, each agent schedules demographic events annually on their birthday. To expand the model's capabilities beyond basic demographic features, GEPOC ABM Geography adds a residence feature in the form of geographical coordinates. Further extensions include GEPOC ABM IM, which adds internal migration processes in three variants, and GEPOC ABM CL, which models locations where agents may have contacts with each other. In this definition we solely specify the conceptual models and do not go into any details with respect to implementation or gathering/processing of parametrisation data.
    \end{abstract}
    }
	\pagebreak

\end{titlepage}
\tableofcontents
\newpage
\section{Introduction}
The Generic Population Concept - Agent-Based Model, henceforth short GEPOC ABM, is one of the models within GEPOC, a generic concept to model a country's population and its dynamics using causal modelling approaches. By 2025, the other models in the concept are a system-dynamics model GEPOC SD~\cite{bicher_definition_2015} and a partial differential equation model GEPOC PDE~\cite{bicher2018mean}. Goal of generic population concept GEPOC is to establish valid and flexible base models for population-focused research questions. By now, GEPOC ABM is by far the most successful of the three mentioned models with respect to applications.

In the following we will provide a model definition of the current conceptual model of GEPOC ABM and of three extensions of it. We hereby put specific emphasis on the term conceptual, since we do not specify how the model could be implemented, where data for parametrisation could be found and accessed, or how raw data could be processed for parametrisation. These challenges might be equally or even more difficult than the conceptualisation of the model itself. This documentation refers to Version 2.2 of GEPOC ABM and extends Version 1, published in \cite{bicher_definition_2015}, by all geographic features and Version 2.1, published in \cite{bicher_gepoc_2018}, by an improved global and individual time-update scheme and updated spatial population sampling mechanic.

\section{Definition/Communication Strategy}
\label{sec:communication_strategy}
The model and its extensions will be specified based on the \important{ODD (Overview, Design Concepts, Details) protocol} by Volker Grimm et.al. \cite{grimm_standard_2006,grimm_odd_2010}. This protocol mainly provides standardised headlines and defines in which order certain model parts are presented.

Since the model uses various continuous-time (i.e. discrete event) aspects, an additional visual concept is used to depict the underlying dynamics, a customised \important{event graph} notation. In their foundation, the diagrams use the syntax from \cite{schruben1983simulation} including parametrised events and cancelling edges\footnote{That means, scheduling edges may have (a) a time-delay, indicated by a variable or number directly above or below the start of the edge, (b) a condition, indicated by a $\int$ sign with condition text next to it, and/or (c) parameters, indicated by boxed variables. Event nodes may have (a) one or many parameters, indicated by round parentheses below the event name, and/or (b) a state effect, indicated by assignments and terms under the node. Cancelling edges are highlighted by dashed arrows.}. To customise the notation we introduced the following adaptions and conventions to the original syntax (see Figure \ref{fig:update_concept_sim} as example):
\begin{itemize}
\item Boxes indicate interfaces between the different layers. They can be regarded as parameterised events which are scheduled by an origin (``from'') into the event queue of a target (``to''). The event notice is hereby added to the queue of the recipient as if it was scheduled by its DES without any time-delay. That means, the original schedule time of the event in the origin layer is irrelevant and must be passed on as additional parameter, if needed. Colours are used to highlight the connections between the three diagrams.
\item We implicitly assume, that all layers reachable via the interfaces exist. We shift the problem of correctly instantiating and deleting agents to the model implementation.
\item State variable assignments are specified by $\leftarrow$, defining variables with $=$ indicates only local and temporal use as parameters within scheduling edges.
\item Functions $f_1,f_2,\dots$ indicate computations which are too comprehensive to be described within the event graph. They are explained in the main text and, in detail, described in Section \ref{sec:parameter_functions}.
\item Variables $p,b,e,i,d$ indicate demography-specific parameter functions which may depend on time, sex and age. They are explained in the main text and, in detail, in Section \ref{sec:parameter_functions}.
\item Variables $U_{i},i=1,\dots,$ define uniformly distributed random numbers between zero and one and are drawn independently at the time at which the event is executed.
\end{itemize}

\newpage
\section{GEPOC ABM Base - Model Definition}
\label{sec:basemodel}
\subsection{Overview}

\subsubsection{Purpose and Patterns}
GEPOC ABM serves the purpose of a base-model for research questions which rely on the population of a country or region and require a microscopic representation of the population. To fulfil this purpose, the model must be capable of (a) creating a valid microscopic image of the population at a given point in time and (b) validly depict the dynamics of this image for a given time-span. The validity is measured qualitatively and quantitatively.

\subsubsection{Entities, State Variables and Scales}
GEPOC ABM uses two types of agents, {\personagent}s, short {\pa}s and {\interfaceagent}s.

\important{\mbox{\Personagent}s represent the actual individuals of the country's population.} 
\important{Each {\textit{pa}} has two parameters which are set at initialisation:}
\begin{itemize}
\item \important{date of birth (\textit{birthdate}), and}
\item \important{sex at birth (\textit{sex}).}
\end{itemize}
Hereby, \textit{sex} is a binary variable and can either be \textit{male} or \textit{female} -- see below for a precise interpretation. Furthermore, an agent's \textit{age} is regarded as dependent state-variable of the agent and is computed from its date-of-birth and the current simulation time.

Usually, one \textit{pa} represents one natural individual in reality, yet, the model can be scaled by an arbitrary scaling factor $\sigma$ so that 
\[1\ \personagent\ \equiv \sigma\text{ real persons}.\]
In this situation, one agent in the model statistically represents $\sigma$ persons in reality\footnote{Usually $\sigma>1$ is chosen only if computation time is an issue and if the research question allows it.}.

In addition the model uses one \interfaceagent, in prior work often called government-agent. \important{The \mbox{\interfaceagent} is responsible for the interface between country/region and the rest of the world.}
In base GEPOC ABM its primary target is to sample and introduce immigrated agents into the model. It is not regarded to have a certain state\footnote{Even though this entity does not have a state on its own, we would still consider the \interfaceagent an agent, since it highly interacts with the {\pa} population}.

\paragraph{Sex.}
Considering the sensitivity of the topic, the agent variable \textit{sex} requires an accurate interpretation w.r. to what real-world element it depicts:
\definition{\textit{sex}}{def:sex}{Persons with female biological sex at birth are modelled by agents with \textit{sex}=\textit{female}. As in reality, they have the potential to create offspring. All other persons are modelled by agents with \textit{sex}=\textit{male}.}
Including gender or non-binary \textit{sex} is not included in the base model, since it is not relevant for demography dynamics.

\subsubsection{Process Overview and Scheduling}
\label{sec:basemodel_process}
\important{In general GEPOC ABM is updated using a hybrid time-update concept in between a classical time-discrete and a discrete-event (DE) approach. The overall simulation unit uses a tick-based update system, the agents update with separate DESs. Hence, the overall dynamics can be described in three layers:}
\begin{itemize}
    \item \important{Simulation layer,}
    \item \important{\mbox{\Personagent} layer, and}
    \item \important{\mbox{\Interfaceagent} layer.}
\end{itemize}
\important{
The overall model can be interpreted as a co-simulation, where the simulation layer advances time on discrete ticks and synchronizes all lower-level discrete-event simulators (DESs). These comprise the person agents, each with its own event queue, and an interface agent that generates immigration events. This hierarchical arrangement yields a highly scalable, modular architecture with short local event queues.} This concept is shown in Figure \ref{fig:advancement}.

\begin{figure}[H]
    \centering
    \begin{tikzpicture}[
arrow/.style={-{Latex[scale=1.1]}},
arrow2/.style={dotted, -{Stealth[harpoon]}}
]
\draw (-2,0) edge[-{Stealth[scale=1.5]}] node[pos=0.5,label={above:simulation time}] {} (12,0);
\node[circle,draw=blue,label={[blue]left:simulation-layer},label={above:$t_0$}] (t0) at (0,-1) {};
\node[circle,draw=blue,label={above:$t_1$}] (t1) at (5.5,-1) {};
\node[circle,draw=blue,label={above:$t_2$}] (t2) at (11,-1) {};

\draw (t0.east) edge[arrow,color=blue] (t1.west);
\draw (t1.east) edge[arrow,color=blue] (t2.west);
\draw (t2.east) edge[arrow,color=blue] ($(t2.east)+(0.7,0)$);

\node[label={left:DES of agent $1$}] (al) at (0,-2) {};
\node[circle,draw] (a1) at (1,-2) {};
\node[circle,draw] (a2) at (3,-2) {};
\node[circle,draw] (a3) at (7,-2) {};
\node[circle,draw] (a4) at (9,-2) {};
\draw (al.east) edge[arrow] (a1);
\draw (a1) edge[arrow] (a2);
\draw (a2) edge[arrow] (a3);
\draw (a3) edge[arrow] (a4);
\draw (a4) edge[arrow] ($(t2.east)+(0.7,-1)$);

\node[label={left:DES of removed agent $2$}] (bl) at (0,-3) {};
\node[circle,draw] (b1) at (2,-3) {};
\node[circle,draw,inner sep = 0, minimum width = 9pt] (b2) at (4,-3) {\large{$-$}};
\draw (bl) edge[arrow] (b1);
\draw (b1) edge[arrow] (b2);
\draw (b2) edge[arrow] ($(t1.east)+(0,-2)$);

\node[label={left:DES of agent $3$}] (cl) at (0,-4) {};
\node[circle,draw] (c1) at (4,-4) {};
\node[circle,draw] (c2) at (8,-4) {};
\node[circle,draw] (c3) at (10,-4) {};
\draw (cl.east) edge[arrow] (c1);
\draw (c1) edge[arrow] (c2);
\draw (c2) edge[arrow] (c3);
\draw (c3) edge[arrow] ($(t2.east)+(0.7,-3)$);

\node[label={left:DES of new agent $4$}] (dl) at (0,-5) {};
\node[circle,draw,inner sep = 0, minimum width = 9pt] (d1) at (4,-5) {\large{$+$}};
\node[circle,draw] (d2) at (9,-5) {};
\draw (d1) edge[arrow] (d2);
\draw (d2) edge[arrow] ($(t2.east)+(0.7,-4)$);

\node[label={left:DES of new agent $5$}] (el) at (0,-6) {};
\node[circle,draw,inner sep = 0, minimum width = 9pt] (e1) at (3,-6) {\large{$+$}};
\draw (e1) edge[arrow] ($(t2.east)+(0.7,-5)$);

\node[label={left:DES interface agent}] (fl) at (0,-7) {};
\node[circle,draw,inner sep = 0, minimum width = 9pt] (f1) at (3,-7) {};
\draw (fl) edge[arrow] (f1);
\draw (f1) edge[arrow] ($(t2.east)+(0.7,-6)$);


\draw[color=blue] (t0.east) edge[arrow] ($(t0.east)+(0,-1)$) edge[arrow] ($(t0.east)+(0,-2)$) edge[arrow] ($(t0.east)+(0,-3)$) edge[arrow] ($(t0.east)+(0,-6)$);

\draw[color=blue] (t1.east) edge[arrow] ($(t1.east)+(0,-1)$) edge[arrow] ($(t1.east)+(0,-2)$) edge[arrow] ($(t1.east)+(0,-4)$) edge[arrow] ($(t1.east)+(0,-5)$) edge[arrow] ($(t1.east)+(0,-6)$);


\draw[color=blue] ($(t1.west)+(0,-6)$) edge[arrow] (t1.west);

\draw[color=blue] ($(t2.west)+(0,-6)$) edge[arrow] (t2.west);


\draw[color=blue] (al.east) edge[arrow2,bend left = 40] (a1);
\draw[color=black] (a1) edge[arrow2,bend left = 30] (a2);
\draw[color=blue] ($(t1.east)+(0,-1)$) edge[arrow2,bend left = 30] (a4);
\draw[color=black] (a2) edge[arrow2,bend left = 25] (a3);

\draw[color=blue] (bl.east) edge[arrow2,bend left = 30] (b1);
\draw[color=black] (b1) edge[arrow2,bend left = 30] (b2);

\draw[color=blue] (cl.east) edge[arrow2,bend left = 30] (c1);
\draw[color=red] (c1) edge[dotted,bend left = 40] ($(t1.west)+(0,-3)$);
\draw[color=black] (c1) edge[arrow2,bend left = 30] (c2);
\draw[color=black] (c2) edge[arrow2,bend left = 30] (c3);

\draw[color=red]  ($(t1.west)+(0,-3)$) edge[arrow2, bend right = 30] (d1);
\draw[color=black] (d1) edge[arrow2, bend left = 30] (d2);

\draw[color=blue] ($(t0.east)+(0,-6)$) edge[arrow2,bend left = 30] (f1);

\draw[color=red]  ($(t1.west)+(0,-6)$) edge[arrow2, bend right = 30] (e1);
\draw[color=red] (f1) edge[dotted,bend left = 30] ($(t1.west)+(0,-6)$);

\node[fill=blue,draw, circle,inner sep = 0, minimum width = 3pt]  at ($(t1.east)+(0,-2)$) {};
\end{tikzpicture}
    \caption{Process Overview of GEPOC ABM - Solid black arrows show local time advancement within each agent's DES; dotted harpoons depict event scheduling. The simulation layer provides the runtime infrastructure that orchestrates the agents' DESs. In each macro step, the simulation layer (blue) first observes the current states of the DESs (upward arrows) and may intervene (downward arrows) by scheduling external events (dotted blue arrows) or by adding and removing (events marked with $+$ and $-$) agents. Red dotted harpoons indicate interactions between agents via the simulation layer, here shown for a birth and an immigration event.}
    \label{fig:advancement}
\end{figure}
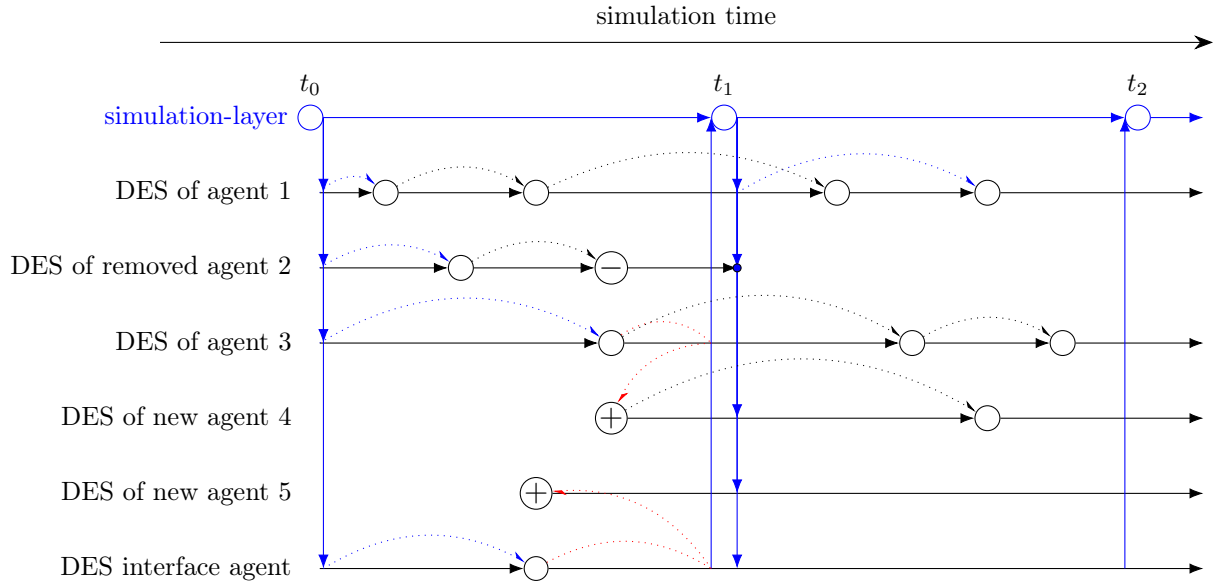

The three layers are furthermore defined using the corresponding event-graph diagrams (see Section \ref{sec:communication_strategy} for details on how to read these diagrams).
\newpage
\begin{figure}[H]
    \centering
    \begin{tikzpicture}[
eventtype/.style = {circle, draw, minimum width = 2 cm,text=black,draw=black, align = center},
interface/.style = {rectangle, draw, minimum width = 2 cm,text=black, align = center},
scheduleplain/.style = {-{Stealth[scale=1.5]}},
nodeif/.style = {pos=0.66,anchor=center, circle, minimum width=0.5cm},
nodetime/.style = {pos=0.05,anchor=center},
nodearg/.style = {pos=0.4,anchor=center, rectangle, minimum width=0.5 cm,fill=white,draw},
]
\node[eventtype,label={[align=left]below:$t\leftarrow 0$\\$N\leftarrow 0$}] (run) at (0,1) {Run};
\node[eventtype,label={[align=left]below:$(bd,s)=f_1(0)$\\$N++$}] (loop1) at (0,-3) {Init Loop\\ $(j)$};
\draw (loop1) edge[scheduleplain,out=90,in=0,looseness=3] node[nodearg]{$j+1$} node[pos=0.6,label={[align=left]right:$j<[p(0)/\sigma]$}] {} node[pos=0.6,rotate=140] {\large{$\int$}} (loop1);
\draw (run) edge[scheduleplain,bend right = 35] node[nodearg,pos=0.6] {$0$} (loop1);

\node[eventtype,font=\footnotesize] (tsp) at (5.5,0) {Time-Step\\Planning\\ $(i)$};
\node[eventtype] (loop2) at (6,-3) {Planning\\ Loop\\ $(j,i)$};
\draw (loop2) edge[scheduleplain,out=90,in=0,looseness=3] node[nodearg]{$j+1$} node[pos=0.6,label={[align=left]right:$j<N$}] {} node[pos=0.6,rotate=140] {\large{$\int$}} (loop2);
\draw (tsp) edge[scheduleplain,bend right = 20] node[nodearg] {$0,i$} (loop2);

\node[eventtype,font=\footnotesize] (tse) at (11,0) {Time-Step\\Execute\\ $(i)$};
\node[eventtype] (loop3) at (12,-3) {Execute\\ Loop\\ $(j,i)$};
\draw (loop3) edge[scheduleplain,out=90,in=0,looseness=3] node[nodearg]{$j+1$} node[pos=0.6,label={[align=left]right:$j<N$}] {} node[pos=0.6,rotate=140] {\large{$\int$}} (loop3);
\draw (tse) edge[scheduleplain,bend right = 30] node[nodearg] {$0,i$} (loop3);

\node[eventtype,font=\footnotesize] (obs) at (14,0) {Observe};
\draw (tse) edge[scheduleplain](obs);

\draw (run) edge[scheduleplain,bend right = 10] node[nodearg]{$0$} (tsp);
\draw (tsp) edge[scheduleplain,bend right = 20] node[nodetime,pos=0.18,label = above:$\Delta t_i$] {} node[nodearg]{$i$} (tse);
\draw (tse) edge[scheduleplain,bend right = 20] node[nodearg] {$i+1$} node[nodeif,label = above:$i<m$]{\large{$\int$}} (tsp);

\node[interface,draw=red,fill=red!10] (ini1) at (2,-7) {Interface $A_2$\\ to\\$\pa(j)$\\$(t,bd,s)$};
\draw (loop1) edge[scheduleplain,bend left = 20] node[nodearg,pos=0.7] {$j,0,bd,s$} (ini1);

\node[interface,draw=blue,fill=blue!10] (ini2) at (-0.5,-7) {Interface $A_1$\\ to\\\footnotesize{interface agent}\\$(t_0)$};
\draw (run) edge[scheduleplain,out=230,in=120] node[nodearg,pos=0.7] {$t_0$} (ini2);

\node[interface,draw=red,fill=red!10] (tsp1) at (8,-7) {Interface $B_2$\\ to\\$\pa(j)$\\$(t_i,t_{i+1})$};
\draw (loop2) edge[scheduleplain,bend left = 20] node[nodearg,pos=0.7] {$j,t_i,t_{i+1}$} node[nodeif,pos=0.17,label = {right:$\text{pa}(j):\textit{active}$}]{} node[pos=0.17,rotate=140] {\large{$\int$}} (tsp1);

\node[interface,draw=blue,fill=blue!10] (tsp2) at (5.5,-7) {Interface $B_1$\\ to\\\footnotesize{interface agent}\\$(t_i,t_{i+1})$};
\draw (tsp) edge[scheduleplain,out=230,in=120] node[nodearg,pos=0.7] {$t_i,t_{i+1}$} (tsp2);

\node[interface,draw=red,fill=red!10] (tse1) at (14,-7){Interface $C_2$\\ to\\$\pa(j)$\\$(t_{i+1})$};
\draw (loop3) edge[scheduleplain,bend left = 20] node[nodearg,pos=0.7] {$j,t_{i+1}$} node[nodeif,pos=0.17,label = {right:$\text{pa}(j):\textit{active}$}]{} node[pos=0.17,rotate=140] {\large{$\int$}} (tse1);

\node[interface,draw=blue,fill=blue!10] (tse2) at (11.5,-7)  {Interface $C_1$\\ to\\\footnotesize{interface agent}\\$(t_{i+1})$};
\draw (tse) edge[scheduleplain,out=230,in=120] node[nodearg,pos=0.7] {$t_{i+1}$} (tse2);

\node[eventtype,label={[align=left]below:$N++$\\$j=N$}] (add) at (2,-10.5) {Add\\$(t',bd,s)$};
\draw (add) edge[scheduleplain] node[nodearg,pos=0.3] {$j,t',bd,s$} (ini1);

\node[interface,draw=yellow,fill=yellow!10] (add1) at (7,-10.5) {Interface $D$\\ from\\agent\\$(t',bd,s)$};
\draw (add1) edge[scheduleplain] node[nodearg,pos=0.5] {$t',bd,s$} (add);

\draw (run) edge[scheduleplain,bend left = 20] (obs);
\end{tikzpicture}
    \caption{Uppermost layer of the time-update concept of GEPOC ABM using event-graph-like notation. Coloured boxes are the interfaces with the diagrams shown in Figure \ref{fig:update_concept_pa} and \ref{fig:update_concept_gov}. Functions $f_i$ are explained in the text.}
    \label{fig:update_concept_sim}
\end{figure}
\paragraph{Simulation Layer.} Figure \ref{fig:update_concept_sim} displays the event structure of the simulation-layer. Hereby, a discrete time-tick $t_i$ with $m$, not-necessarily equidistant, time-steps with lengths $\Delta t_i,i\in \{1,\dots,m\}$ [seconds] lies underneath. We write
\[t_i:=t_0+\sum_{j=1}^{i}\Delta t_j.\]

The diagram is explained in temporal order using the correct event-priorities.
\begin{itemize}
\item $t=t_0:$ \textit{Run}

\important{The \textit{Run} event triggers the start of the simulation. It sets the simulation time $t$ to $t_0$ which, is associated with the manually defined start-date $y_0\text{-}m_0\text{-}d_0TH_0\text{:}M_0\text{:}s_0$} See \ref{sec:parameter_functions} for details in this regard.

Moreover, the \textit{Run} event directly triggers \textit{Interface $A_1$} in the \interfaceagent (blue) and the \textit{Init Loop} event. The latter, alike all ``loop'' events, essentially iterates over the \pa population. In this case it creates the \pa indices $j$, updates the total number of agents $N$, and triggers \textit{Interface $A_2$} for all {\pa}s.\footnote{Specification of ``loops'' inside event graphs is, though technically correct, rather a misuse of the concept. Nevertheless, it often cannot be avoided if event graphs are used to describe ABMs.}

\important{Since \textit{Interfaces $A_1$} and $A_2$ trigger the \textit{Init} event in the corresponding agents, the ultimate goal of the \textit{Run} event is to properly initialise the agent population.}

As indicated by the rescheduling condition of the \textit{Init Loop} event, \textit{Interface $A_2$} is triggered $[p(t)/\sigma]$ times. Hereby $p(t)$ stands for the total population at time $t$, as given by a model parameter function (see below), $\sigma$ denotes the mentioned scaling factor of the model, and $[\cdot]$ indicates to round the number to the nearest integer. In every loop iteration a random birthdate $bd$ and sex $s$ is sampled which, in the diagram, is denoted by function $f_1$. For details regards $p$ and $f_1$, we refer to Section \ref{sec:parameter_functions}.

\item $t=t_0:$ \textit{Observe}

The \textit{Observe} event is used to \important{track the state of the simulation}. That means, the effect of this event can be defined in the specific implementation and may vary depending of the model-usage. Usually, aggregated numbers are collected by looping over the active agents. Most importantly the event \important{must not have any influence on the dynamics of the simulation itself} and its priority lies between the various \textit{Init} events and the \textit{Time-Step Planning} event.

\item $t=t_0:$ \textit{Time Step Planning}

\important{The \textit{Time Step Planning} event is used to schedule all simulation- and agent-specific actions for the upcoming time-tick $i$ from $t_i$ to $t_{i+1}$.} It essentially triggers all corresponding \textit{Time Step Planning} events of all agents via \textit{Interfaces B1} and \textit{B2}.

In contrast to the \textit{Init Loop} event, the \textit{Planning} and \textit{Execute Loop} event iterate over all {\pa}s, ever initialised via \textit{Interface A2}, but triggers the corresponding \textit{Interface B2} only for those, who are rendered \textit{active}. As seen in Figure \ref{fig:update_concept_pa} a \pa is initialised with $active=true$, but may be rendered inactive due to emigration or death.
d
\item $t=t_1:$ \textit{Time-Step Execute}.

\important{Between \textit{Planning} and \textit{Execute}, simulation time is advanced. While the main purpose of the prior was to schedule new events, the role of the latter is to advance event execution in the individual DESs of the agent layers, whereas the interface agent is prioritised. Note that all new agents created in the course of this time-step are already considered update (see below). This is done by triggering their \textit{Time-Step Execute} event via \textit{Interfaces C1} and \textit{C2}.}

\item $t=t_1:$ \textit{Add} ($\geq 0$ times)
Several \textit{Add} events are scheduled not from within but by the simulators of the individual agents via \textit{Interface D}. This is done as a consequence of birth and immigration events with the goal to increase the \pa population accordingly. With the introduced logic of the interfaces, the event notices arrive in the event queue of the simulation at the same time as the \textit{Execute Loop}, \textit{Observe} and \textit{Time-Step Planning} event. Compared to these it is specified to have a higher priority.
The precise creation time $t_0< t'\leq t_{1}$ of a newly created agent is passed as a parameter to \textit{Interface A2} which will also be treated as the initialisation time of the agent's DES. Since, the corresponding agent is set to \textit{active} and $N$ is increased, it will already be considered in the \textit{Execute Loop} of the current time-step to synchronise its DES to time-step $t_1$.

\item $t=t_1:$ \textit{Observe}

\item $t=t_1:$ \textit{Time-Step Planning} (+ Loop)

\item $t=t_2:$ \textit{Time-Step Execute} (+ Loop)

\item $t=t_2:$ \textit{Add} ($\geq 0$ times)

\item $t=t_2:$ \textit{Observe}

\item \dots

\end{itemize}
The loop breaks and the entire simulation stops as soon as the condition to schedule a new \textit{Time-Step Planning} event, $i<m$, is not met anymore.
\newpage
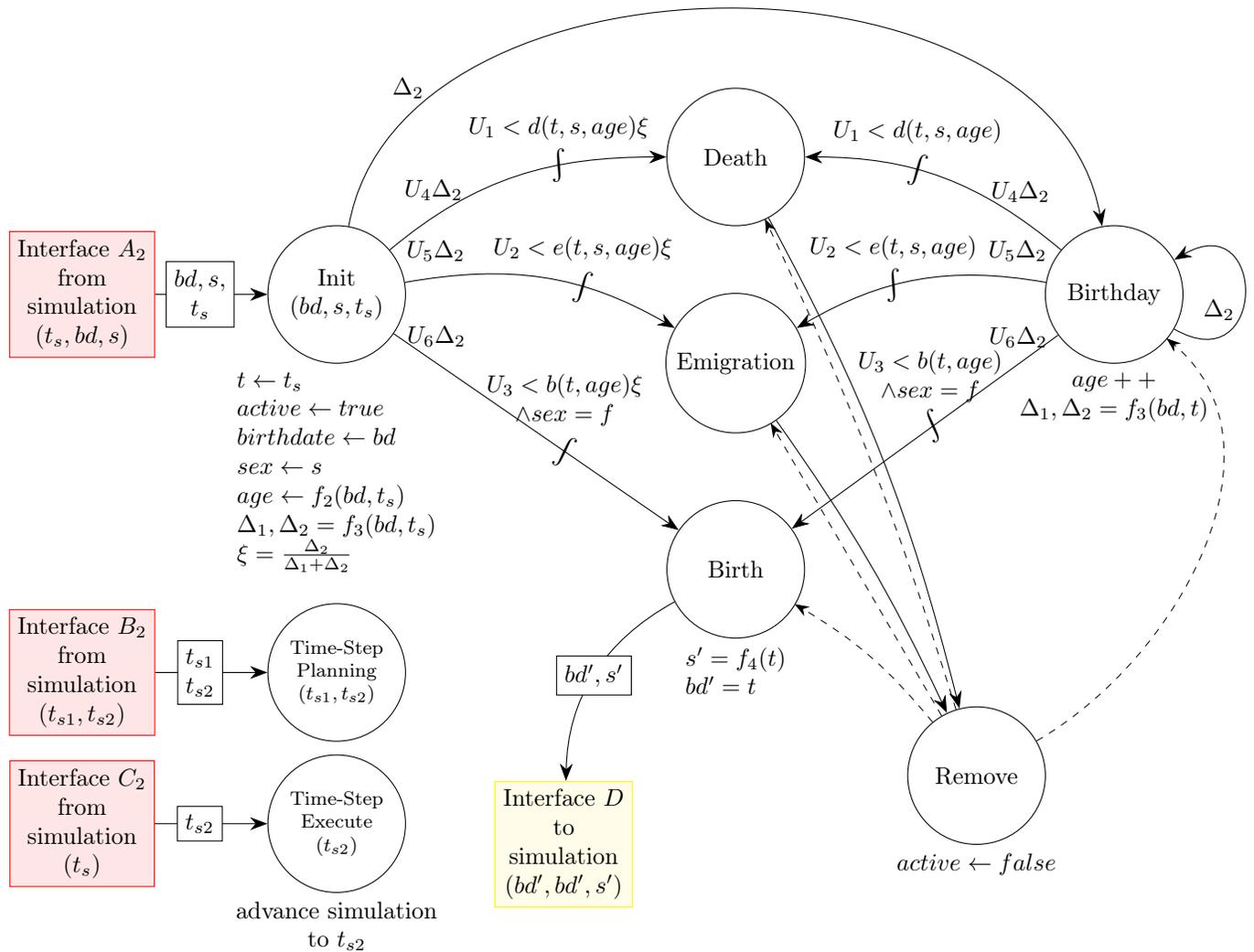
\begin{figure}[H]
    \centering
    \begin{tikzpicture}[
eventtype/.style = {circle, draw, minimum width = 2 cm,text=black,draw=black, align = center},
interface/.style = {rectangle, draw, minimum width = 2 cm,text=black, align = center},
scheduleplain/.style = {-{Stealth[scale=1.5]}},
nodeif/.style = {pos=0.66,anchor=center, circle, minimum width=0.5cm},
nodetime/.style = {pos=0.05,anchor=center},
nodearg/.style = {pos=0.4,anchor=center, rectangle, minimum width=0.5 cm,fill=white,draw},
]

\node[interface,draw=red,fill=red!10] (int1) at (-2,0) {Interface $A_2$\\ from\\simulation\\$(t_s,bd,s)$};

\node[interface,draw=red,fill=red!10] (int2) at (-2,-5.5) {Interface $B_2$\\ from\\simulation\\$(t_{s1},t_{s2})$};

\node[interface,draw=red,fill=red!10] (int3) at (-2,-7.7) {Interface $C_2$\\ from\\simulation\\$(t_s)$};

\node[eventtype,label={[align=left]below:$t\leftarrow t_s$\\$active\leftarrow true$\\$birthdate\leftarrow bd$\\$sex\leftarrow s$\\$age\leftarrow f_2(bd,t_{s})$\\$\Delta_1,\Delta_2=f_3(bd,t_{s})$\\$\xi=\frac{\Delta_2}{\Delta_1+\Delta_2}$}] (init) at (1.7,0) {Init\\$(bd,s,t_{s})$};

\node[eventtype, font=\footnotesize] (tsp) at (1.7,-5.5) {Time-Step\\Planning\\$(t_{s1},t_{s2})$};

\node[eventtype, font=\footnotesize,label={[align=center]below:advance simulation\\ to $t_{s2}$}] (tse) at (1.7,-7.7) {Time-Step\\Execute\\$(t_{s2})$};

\draw (int1) edge[scheduleplain] node[nodearg,pos=0.4,align=center]{$bd,s,$\\$t_s$} (init);

\draw (int2) edge[scheduleplain] node[nodearg,pos=0.4,align=center]{$t_{s1}$\\$t_{s2}$} (tsp);

\draw (int3) edge[scheduleplain] node[nodearg,pos=0.4,align=center]{$t_{s2}$} (tse);


\node[eventtype,label={[align=center]below:$age++$\\$\Delta_1,\Delta_2=f_3(bd,t)$}] (bd) at (13,0) {Birthday};
\draw (init) edge[scheduleplain,bend left = 80] node[nodetime,pos=0.15,label ={above:$\Delta_2$}] {} (bd);
\draw (bd) edge[scheduleplain,out = 330, in = 30, looseness = 4] node[nodetime,pos=0.2,label ={above:$\Delta_2$}]{} (bd);

\node[eventtype] (death) at (7.5,2) {Death};
\node[eventtype] (emi) at (7.5,-1) {Emigration};
\node[eventtype,label={[align=left] below:$s'=f_4(t)$\\$bd'=t$}] (birth) at (7.5,-4) {Birth};

\draw (init) edge[scheduleplain,out = 40,in = 180] node[nodetime,pos=0.17,label ={above:$U_4\Delta_2$}] {} node[nodeif,pos=0.65,label ={above:$U_1<d(t,s,age)\xi$}]{} node[pos=0.65,rotate=10] {\large{$\int$}} (death);

\draw (init) edge[scheduleplain,out = 10,in = 150] node[nodetime,pos=0.1,label ={above:$U_5\Delta_2$}] {} node[nodeif,pos=0.65,label ={above:$U_2<e(t,s,age)\xi$}]{} node[pos=0.65,rotate=170] {\large{$\int$}} (emi);

\draw (init) edge[scheduleplain] node[nodetime,pos=0.15,label ={above:$U_6\Delta_2$}] {} node[nodeif,pos=0.6,label ={[align=center]above:$U_3<b(t,age)\xi$\\$\wedge sex=f$}]{} node[pos=0.6,rotate=160] {\large{$\int$}} (birth);

\draw (bd) edge[scheduleplain,out = 140,in = 0] node[nodetime,pos=0.17,label ={above:$U_4\Delta_2$}] {} node[nodeif,pos=0.6,label ={above:$U_1<d(t,s,age)$}]{} node[pos=0.6,rotate=170] {\large{$\int$}} (death);

\draw (bd) edge[scheduleplain,out = 170,in = 30] node[nodetime,pos=0.1,label ={above:$U_5\Delta_2$}] {} node[nodeif,pos=0.6,label ={above:$U_2<e(t,s,age)$}]{} node[pos=0.6,rotate=10] {\large{$\int$}} (emi);

\draw (bd) edge[scheduleplain] node[nodetime,pos=0.15,label ={above:$U_6\Delta_2$}] {} node[nodeif,pos=0.48,label ={[align=center]above:$U_3<b(t,age)$\\$\wedge sex=f$}]{} node[pos=0.47,rotate=30] {\large{$\int$}} (birth);

\node[interface,draw=yellow,fill=yellow!10] (ini2) at (5,-8) {Interface $D$\\ to\\simulation\\$(bd',bd',s')$};
\draw (birth) edge[scheduleplain,bend right = 30] node[nodearg,pos=0.5,align=center] {$bd',s'$} (ini2);

\node[eventtype,label={below:$active\leftarrow false$}] (rem) at (11,-7) {Remove};

\draw (death) edge[scheduleplain,bend left = 7] (rem);
\draw (rem) edge[scheduleplain,dashed,line width=0.1pt,bend right = 4] (death);

\draw (emi) edge[scheduleplain,bend left = 5] (rem);
\draw (rem) edge[scheduleplain,dashed,line width=0.1pt,bend right = 0] (emi);

\draw (rem) edge[scheduleplain,dashed,line width=0.1pt,out=30,in=320] (bd);

\draw (rem) edge[scheduleplain,dashed,line width=0.1pt,bend right = 10] (birth);
\end{tikzpicture}
    \caption{\Personagent-layer of the time-update concept of GEPOC ABM using event-graph-like notation. Coloured boxes are the interfaces with the diagrams shown in Figure \ref{fig:update_concept_sim} and \ref{fig:update_concept_gov}. Functions $f_i$ are explained in the text.}
    \label{fig:update_concept_pa}
\end{figure}
\paragraph{\Personagent-layer.}
\important{The dynamics of the \mbox{\personagent} layer, shown in Figure \mbox{\ref{fig:update_concept_pa}}, is defined by the three demographic standard-events birth, emigration and death. These events are decided and scheduled on yearly basis at the \mbox{\pa}s birthday.}
\newpage 
We describe the events in the diagram in temporal order.
\begin{itemize}
\item \textit{Init}:

The \textit{Init} event can be interpreted as the ``Run'' event of the {\pa}'s DE model. Since it is hierarchically inferior to the discrete-time model of the simulation layer, the DE model does only update, if allowed explicitly to do so (see below). 

The \textit{Init} event sets the fixed \textit{birthdate} and \textit{sex} of the agent, which were passed from the simulation layer. Moreover it synchronises the discrete event simulator of the agent with the one from the simulation for the very first time by setting $t=t_{s}$. The agent is furthermore set to \textit{active}, meaning that it will be regarded by the simulation layer's planning and execute events. Moreover, the event computes the agent's age at time $t_s$ via $f_2$. See \ref{sec:parameter_functions} for details on the specification of this function.

Most important feature of the \textit{Init} event is, that it starts the annual birthday cycle. I.e. it calls the \textit{Birthday} event with a time delay of $\Delta_2$. The latter is going to repeat annually until the \pa is removed or the simulation terminates (see below).
The value of $\Delta_2$ is given by function $f_3$ which computes the absolute time durations (in seconds) between the current simulation time $t_s$ and the {\pa}s previous ($\Delta_1$) and next birthday ($\Delta_2$).

Since it is possible, that demographic events may happen before the agent has had it's first birthday-event in the simulation, the \textit{Init} event can already trigger demographic events with a scaled-down likelihood (analogous to the \textit{Birthday} explained below to which we refer for details). The scaling factor $\xi$ is equivalent to the fraction of the life-year remaining until the first \textit{Birthday} event will take place, i.e. $\Delta_2/(\Delta_2+\Delta_1)$.

\item \textit{Birthday}:

\important{As in reality, the model regularly ``celebrates'' birthdays of \mbox{\pa}s. In the model they are used to increment their age and plan/schedule demographic events for the upcoming life-year: Three uniformly distributed random numbers $U_1, U_2,$ and $U_3$ are drawn deciding, whether the agent will die, emigrate or give birth to a new \mbox{\pa} in between the time of the event and the agent's upcoming birthday.}

\important{In this process the random numbers $U_1-U_3$ are compared with the corresponding time, sex and age dependent probabilities $d(t,s,age),e(t,s,age)$ and $b(t,s,age)$.} Note, that these are almost but not fully equivalent to the input parameters $D^p,E^p$ and $B^p$, since a minor correcting transformation is applied before. See Section {\ref{sec:parameter_functions}} for details.

\important{In case any of the events is triggered, a random delay for the event is scheduled. This is done by multiplying the time duration between the current event and the agent's next birthday ($\Delta_2$ as computed by $f_3$) by a uniformly distributed random number, as indicated by $U_4,U_5$ and $U_6$.}

\item \textit{Death}, \textit{Emigration} and \textit{Remove}:

\important{The \textit{Death} and \textit{Emigration} events have the same mechanistic effect on the model, since both cause the agent to leave the scope ($active\leftarrow false$) via the \textit{Remove} event.} Apart from that, the \textit{Remove} event cancels all potentially scheduled events for the future of the \pa from the event queue. In the GEPOC ABM base implementation, this refers to the \textit{Birthday} event, all the potentially scheduled \textit{Birth}, \textit{Emigration} and \textit{Death} event. This is indicated by cancelling edges in the diagram.\footnote{Cancelling edges, by definition, only remove one (the next) potential occurrence of the event from the queue. This is sufficient since no more than one event per type can be scheduled at once.}

\item \textit{Birth}:

\important{The \textit{Birth} event eventually leads to the creation of a new \mbox{\pa} since it triggers the \textit{Add} event in the simulation layer.} The current simulation time is regarded as new birth date, sex is sampled randomly via $f_4$ (see \ref{sec:parameter_functions} for details). In the simulation layer, the \textit{Add} event will be executed as soon as the corresponding DES is updated again. Therefore, agents are only added to the simulation at the discrete time steps of the simulation layer after all individual DESs are finished updating. This, besides other minor problems, prevents unfairness due to the sequence for updating the agents' DESs.

The model is not designed to depict births of twins, triplets, etc.. Hence, the corresponding parameter function must compensate for this.

\item \textit{Time Step Planning}:

While the \textit{Time Step Planning} event is the least interesting event in the GEPOC ABM base version, it is usually one of the most important ones for applications. This event is reserved to schedule \pa events occurring on time-step basis, in particular the ones involving agent-agent interaction. Since these events are scheduled based on the current state of the interacting agents, the modeller must be careful in this process. We highly recommend to schedule agent-agent interaction events with highest priority for time $t_{s1}$. Scheduling them for any later date might cause any of the two agents' states to have changed already, depending on the sequence for updating the agents' DESs.

\item \textit{Time Step Execute}:
 \important{As mentioned earlier, this event solely updates the \pa's DES. That means, that the event queue is processed until the next queued event's scheduling time lies after the passed-on simulation time $t_{s2}$.} Independent of the schedule time of the last processed event, the time variable of the \pa's DES is advanced to $t_{s2}$ afterwards.
\end{itemize}
\newpage
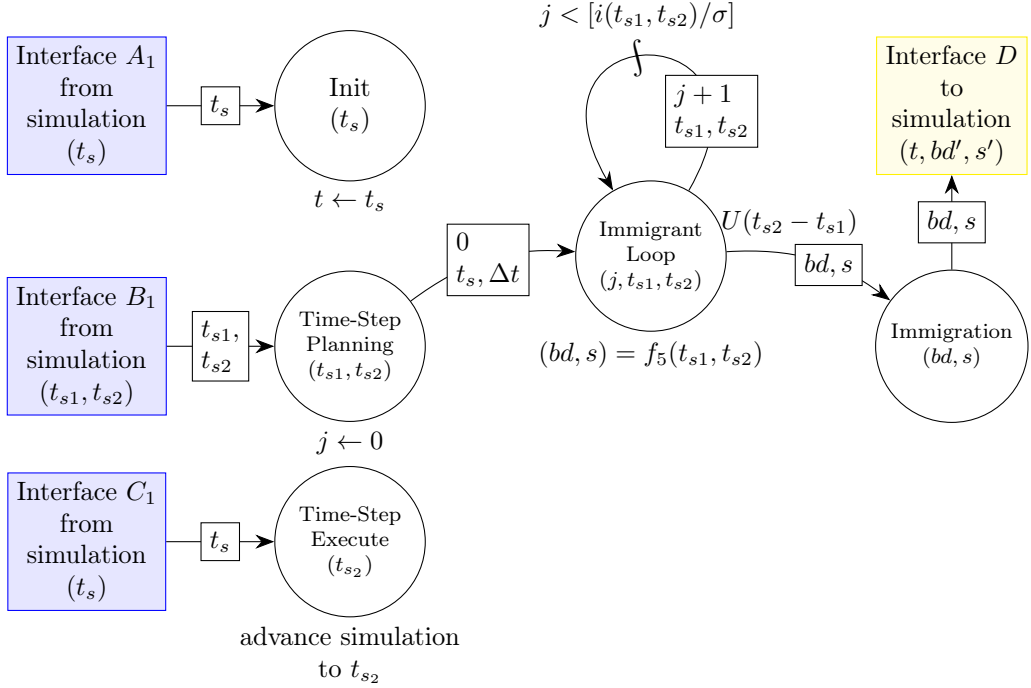
\begin{figure}[H]
    \centering
    \begin{tikzpicture}[
eventtype/.style = {circle, draw, minimum width = 2 cm,text=black,draw=black, align = center},
interface/.style = {rectangle, draw, minimum width = 2 cm,text=black, align = center},
scheduleplain/.style = {-{Stealth[scale=1.5]}},
nodeif/.style = {pos=0.66,anchor=center, circle, minimum width=0.5cm},
nodetime/.style = {pos=0.05,anchor=center},
nodearg/.style = {pos=0.4,anchor=center, rectangle, minimum width=0.5 cm,fill=white,draw},
]

\node[interface,draw=blue,fill=blue!10] (int1) at (-3.5,0) {Interface $A_1$\\ from\\simulation\\$(t_s)$};
\node[interface,draw=blue,fill=blue!10] (int2) at (-3.5,-3.2) {Interface $B_1$\\ from\\simulation\\$(t_{s1},t_{s2})$};
\node[interface,draw=blue,fill=blue!10] (int3) at (-3.5,-5.8) {Interface $C_1$\\ from\\simulation\\$(t_s)$};

\node[eventtype,label={[align=left]below:$t\leftarrow t_{s}$}] (init) at (0,0) {Init\\$(t_s)$};
\node[eventtype,font=\footnotesize,label ={below:$j\leftarrow0$}] (tsp) at (0,-3.2) {Time-Step\\Planning\\$(t_{s1},t_{s2})$};
\node[eventtype,font=\footnotesize,label={[align=center]below:advance simulation\\ to $t_{s_2}$}] (tse) at (0,-5.8) {Time-Step\\Execute\\$(t_{s_2})$};

\draw (int1) edge[scheduleplain] node[nodearg,pos=0.5,align=left]{$t_s$} (init);
\draw (int2) edge[scheduleplain] node[nodearg,pos=0.5,align=left]{$t_{s1},$\\$t_{s2}$} (tsp);
\draw (int3) edge[scheduleplain] node[nodearg,pos=0.5,align=left]{$t_s$} (tse);

\node[eventtype,font=\footnotesize,label ={[align=left]below:$(bd,s)=f_5(t_{s1},t_{s2})$}] (ci) at (4,-2) {Immigrant\\Loop\\$(j,t_{s1},t_{s2})$};
\draw (tsp) edge[scheduleplain,bend left = 20] node[nodearg,pos=0.5,align=left]{$0$\\$t_s,\Delta t$} (ci);

\draw (ci) edge[scheduleplain,out=60,in=120,looseness=7] node[nodearg,pos=0.18,align=left]{$j+1$\\$t_{s1},t_{s2}$} node[nodeif,pos=0.55,label ={above:$j<[i(t_{s1},t_{s2})/\sigma]$}] {} node[nodeif,pos=0.55,rotate=20]{\large{$\int$}} (ci);

\node[eventtype,font=\footnotesize] (immi) at (8,-3.2) {Immigration\\$(bd,s)$};
\draw (ci) edge[scheduleplain,bend left = 20] node[nodetime,pos=0.35,label={above:$U(t_{s2}-t_{s1})$}] {} node[nodearg,pos=0.6,align=center] {$bd,s$} (immi);

\node[interface,draw=yellow,fill=yellow!10] (int4) at (8,0) {Interface $D$\\ to\\simulation\\$(t,bd',s')$};
\draw (immi) edge[scheduleplain] node[nodearg,pos=0.5,align=center] {$bd,s$} (int4);

\end{tikzpicture}
    \caption{\Interfaceagent-layer of the time-update concept of GEPOC ABM using event-graph-like notation. Coloured boxes are the interfaces with the diagrams shown in Figure \ref{fig:update_concept_sim} and \ref{fig:update_concept_pa}. Functions $f_i$ are explained in the text.}
    \label{fig:update_concept_gov}
\end{figure}
\paragraph{\Interfaceagent.} Finally, the processes of the \interfaceagent are described using the diagram displayed in Figure \ref{fig:update_concept_gov}. \important{In the base version of GEPOC ABM, its only purpose is to generate immigrated agents and add them to the simulation.}
\begin{itemize}
\item \textit{Init}

The \textit{Init} event can be regarded as a ``Run'' event for the DE model of the \interfaceagent-layer. Since it is hierarchically inferior to the discrete-time model of the simulation layer, the DE model does only update, if allowed explicitly to do so (see below).  In contrast to the \pa-layer, the \textit{Init} event has no additional purpose for the \interfaceagent.

\item \textit{Time-Step Planning}

The \textit{Time-Step Planning} event is designed to plan all immigration events for the upcoming time-step.

\important{Given a certain time-frame $[t_{s1},t_{s2})$, the \mbox{\interfaceagent} schedules $[i(t_{s1},t_{s2})/\sigma]$ immigration events. Hereby, $i(t_{s1},t_{s2})$ stands for the total number of immigrants to be expected in between $t_{s1}$ and $t_{s2}$, $\sigma$ is the scaling factor of the simulation, and $[\cdot]$ rounds the fraction.} This strategy implicitly assumes that immigrants enter the country uniformly distributed over the course of the year.
In the diagram, this loop is indicated by the \textit{Immigrant Loop} event, which follows the same structure as e.g. the \textit{Init Loop} event in the simulation layer. For details regarding functions $i$ and $f_5$, we refer to Section \ref{sec:parameter_functions}.

\important{Every iteration, the event samples a random birth-date and sex for the immigrated agent.} This is indicated by $f_5$ and follows the same principle as within the initialisation of the initial agent population, denoted as $f_1$ in the simulation layer. \important{In addition, the event schedules a random immigration date.} This is done by sampling a uniformly distributed date between $t_{s1}$ and $t_{s2}$ and reflects the assumption stated earlier\footnote{Depicting processes like immigration in DE models would typically be done via inter-arrival rates. Hereby an immigration event would schedule itself with exponentially distributed delay. We chose to use the described different strategy since it (a) reduces the variance of the total number of immigrants (it is zero essentially) and (b) is easier to integrate in the overall time-tick update scheme of the simulation layer.}.

\item \textit{Immigration}

The \textit{Immigration} event triggers the \textit{Add} event in the simulation layer via the corresponding interface. As a result, immigrants enter the model at the same time as births and are first regarded in the next \textit{Time-Step Planning} event of the simulation layer.
\end{itemize}

\subsection{Design Concepts}
\subsubsection{Basic Principles}
There are a couple of key principles, which were motivation for the rather unusual structure of the model.

\paragraph{Individual DESs.} First of all, the use of individual DESs for simulation and agents seems odd at the first glance. Why not put every event into a global queue? Surprisingly, this strategy comes with many benefits, in particular thinking about the implementation. 
\begin{itemize}
    \item The update of the {\pa}s' and the {\interfaceagent}'s DES can be made in parallel, since they only interact with each other via the simulation layer and at the predefined time-ticks. This strategy can be interpreted as co-simulation (e.g. \cite{hafner2017terminology}).
    \item Long event queues are the crucial factors for performance problems of DES. In the presented model, neither of the \pa or simulation layer DE queues ever exceeds the length of four. This is a huge benefit compared to a model with one huge event queue, which entries would scale with the number of agents.
\end{itemize}

\paragraph{Planning and execute.} Time update is ultimately driven by the iterative use of \textit{Time-Step Planning} and \textit{Time-Step Execute}. This concept is highly beneficial to maintain a logically correct order of processes and to avoid any bias problems occurring due to the looping order of the agents. Although this strategy does not show its full potential in the base version of GEPOC ABM as a result of missing interactions, it proves to be very valuable in applications extending the model. 

\paragraph{Birthday events.} The use of \textit{Birthday} events to plan the future life-year of the \pa might seem unusual, but is motivated by the internationally used standard-definition of ``death probability'' as used by various national census institutes. E.g. 
\begin{center}
 ``The probability of death at some age x refers to the probability of a person living until the age of x to die during that year of age.'' (Statistics Finland, \href{https://www.stat.fi/meta/kas/kuolemanvaara_en.html}{https://www.stat.fi/meta/kas/kuolemanvaara\_en.html})
\end{center}
Therefore, death probabilities from corresponding institutions can directly be used for parametrisation. Emigration and birth probabilities can be estimated from yearly births and emigrants using the same methods.

As a small but fine added value on top: the \textit{Birthday} event is also used to increment age. This way, age must not be calculated from the agent's birth-date every time it is needed, but can be taken directly.

\subsubsection{Interaction}
The model does support interaction between model {\pa}s via corresponding interfaces at discrete points in time. Yet, in the base implementation, this feature is not used.

\subsubsection{Stochasticity}
The model uses stochasticity at various points. In the newest version, the age, sex and regional distribution of the initial agent population is not subject to randomness anymore (compare with \cite{bicher_gepoc_2018,bicher_definition_2015}) which is helpful to reduce variance. Yet basically all planned \pa-events involve stochastic decisions and stochastic scheduling dates.

As the ABM involves (not-even) mean-field interaction between the agents, the relative standard deviation to decrease by $\frac{1}{\sqrt{N}}$(compare \cite{bicher_classification_2017}). Since the model is usually run with several million {\pa}s, between five to ten Monte-Carlo iterations (depending on the volatility of the variable of interest) are typically sufficient to well estimate the mean value of the aggregated numbers, i.e. number of agents with certain age and sex (compare \cite{bicher_review_2019}). 

\subsubsection{Observation}
\label{sec:basemodel_observation}
\important{As an additional plus, the time-step based update of the overall model helps with model observation of the model states. The model distinguishes between two types of output variables: snapshot and differential.} The former tracks the current state of the simulation in between any of the simulation's time-steps (done in the context of the \textit{Observe} event, see Figure \ref{fig:update_concept_sim}), the latter tracks the change of the model by counting events executed within a model time-step (i.e. the events triggered and executed in the context of the \text{Time-Step Execute} events, see Figure \ref{fig:update_concept_sim}). This distinction is not only helpful for output analysis, but also helps with verification and validation of the model (i.e. check if snapshot $+$ differential $\overset{!}{=}$ new snapshot).

\important{To specify which output variables are possible, we first introduce the logic of \textit{agent-characteristics}.} 
\definition{Agent-Characteristic}{def:characteristic}{
A mapping $\Lambda^k$ is said to \textbf{characterise} the agent population with respect to \textbf{characteristic} $k$, if it operates on the joint state space $S$ of the agent of the specified type (usually the \pa type), and maps the agent's temporal state  $a(t)$, onto zero or one:
\begin{equation}\label{eq:characteristics}
    \Lambda^k: S\rightarrow \{0,1\}: a,t\mapsto \Lambda^k(a(t)).
\end{equation}
}
The most obvious choices for such functions would be age- or sex-assessments, i.e. the function returns one, if and only if the current age or sex of the \pa is equal to the specified value or lies within a specified age interval. 

Clearly this concept is helpful for specifying the model output. With $K$ different characteristics $\{1,\dots,K\}$, the \textit{snapshot-output} of the model is given by
\[O_k(t_i):=\sum_{j=1}^{N}\Lambda^k(\pa_j(t_i)).\]
The differential output can also make use of this concept by counting only events of agents which fulfil a certain characteristic.

Here we state some possible outcome variables using the defined concept:
\begin{itemize}
    \item Total number of {\pa}s (with age $a$, sex $s$, birth-date $bd$) at time $t_i$ (\textit{snapshot-output}).
    \item Total number of died, emigrated, immigrated {\pa}s (with age $a$, sex $s$, birth-date $bd$) in between $t_i$ and $t_{i+1}$ (\textit{differential-output}).
    \item Total number of newborn {\pa}s or ``mother'' {\pa}s (with age $a$, sex $s$, birth-date $bd$) in between $t_i$ and $t_{i+1}$ (\textit{differential-output}).
\end{itemize}
One must be aware, that increasing the number of tracked outcomes negatively influences the computation time. So we clearly recommend choosing the characteristics in a minimalist fashion.

\subsection{Details}
By using the event graph standard and the added explanation, the model definition is not yet fully reproducible. Some technical aspects of the parameter functions remain to be discussed, in particular with respect to model parametrisation.

\subsubsection{Initialisation}
\label{sec:initialisation}
Since we used the Event Graph notation for describing the model, it actually lacks a separate initialisation part. All processes which would usually be referred to in the initialisation are already described in the process overview in Section \ref{sec:basemodel_process}. The presented model definition does not distinguish if an agent is created in the course of the \textit{Init Loop} at $t_0$ or at any later point in the course of the simulation -- both are considered to be a part of the model dynamics. 

\subsubsection{Submodels}
\label{sec:parameter_functions}
Formally correct treatment of ``time'' in GEPOC ABM is not simple, since both model parametrisation and update require a date-time representation of it. The model internally uses SI unit seconds and any time-value within the model can be regarded as total number of seconds elapsed since 1970-01-01 (UNIX time). For date-time representation we use the ISO time tuple $t_0\cong y_0\text{-}m_0\text{-}d_0TH_0\text{:}M_0\text{:}s_0$ using the rules of UTC time. This way we establish an isomorphism between model-time and date-time. We conventionally write points in time with $t$ using a suitable index identifier $i$ and write
\[t_i\cong (y_i,m_i,d_i).\]
Note that we drop higher resolved components of the tuple to keep the documentation readable. Anyway, this concept allows us to define subtraction and addition between time-tuples:
\[(y_i,m_i,d_i)\pm(y_j,m_j,d_j) = (y_i,m_i,d_i)\pm t_j = t_i \pm(y_j,m_j,d_j) = t_i \pm t_j.\]

Using this notation, we are finally capable of adding details to the parameter functions introduced earlier. Hereby we refer to the five helper functions $f_1$ to $f_5$ and the demographic functions $p,i,d,e$ and $b$. We will conventionally use variable name $t\cong (y,m,d)$ for time, $bd\cong(y_{bd},m_{bd},d_{bd})$ for birth-date, $a$ for age, and $s$ for sex.

\noindent\paragraph{Functions $p$ and $f_1$.} First of all, $p(t)$ stands for the total population at time $t$. Considering that most census data is given on yearly base, i.e. $P(y)$ stands for the population of the region at $(y,1,1)$, we define this function as
\begin{equation} p:t\mapsto p(t)=\left\lbrace\begin{array}{l}P(y),\text{   if   } (t-(y,1,1))<((y+1,1,1)-t), \\
P(y+1),\quad \text{otherwise.}
\end{array}\right.\end{equation}
Note, that the switch case takes the population from the closest ``first-of-first'' of a year.

Moreover, sampling of birth-date and sex is a two step process: $f_1(t)=f_{11}(t)\circ f_{12}$. First, a random age $a$ and sex $s$ are drawn from a joint age-sex distribution for the given time:
\begin{equation}f_{11}:t\mapsto f_{11}(t)=(a,s)=(X,Y) \text{   with   }Pr(X=a,Y=s|t)=\left\lbrace\begin{array}{l}\frac{P(y,s,a)}{P(y)},\text{   if   } (t-(y,1,1))<((y+1,1,1)-t), \\ \frac{P(y+1,s,a)}{P(y+1)},\quad \text{otherwise.}\end{array}\right.\end{equation}
In a second step, a random birth-date $bd$ is sampled under the assumption, that births are distributed uniformly within the course of the year:
\begin{equation}f_{12}:(a,s)\mapsto f_{12}(a,s)=(bd,s)=\left((y-a,1,1)+U\cdot ((y-a+1,1,1)-(y-a,1,1)),s\right).\end{equation}
In the newest version of GEPOC ABM, $p$ and $f_{11}$ are no longer separate processes. Instead of creating a random sex and age for all $[p(t)/\sigma]$ agents, we would instead create
\begin{equation}[P(t,s,a)/\sigma]_{s},\text{    with    }P(t,s,a)=\left\lbrace\begin{array}{l}P(y,s,a),\text{   if   } (t-(y,1,1))<((y+1,1,1)-t), \\ P(y+1,s,a),\quad \text{otherwise,}\end{array}\right.\end{equation}
agents with sex $s$ and age $a$ for all age and sex combinations supported by the parametrisation. This not only avoids (most) fluctuations for the population distribution at starts, it also is computationally less expensive. 

Anyway, since numbers may become small here, we use a specific stochastic rounding method $[\cdot]_{s}$ which is explained below.
\noindent\paragraph{Functions $i$ and $f_5$.}
The concept for immigration is very similar to the one for creating the initial population, yet we have to care for time-differences instead to absolute points-in-time here. We split the interval $(t_{s1},t_{s2})$ into the smallest number $n$ of disjoint sub-intervals, so that the start and endpoint of the interval have equal year:
\begin{equation}
    [t_{s1},t_{s2})=:\dot{\bigcup}_{i=1}^n[t_{i}^s,t^e_{i})=:\dot{\bigcup}_{i=1}^n[(y_i,m_i^s,d_i^s),(y_i,m_i^e,d_i^e)).
\end{equation}
If $t_{s1}$ and $t_{s2}$ lie within the same year, clearly $n=1$, $t_1^s=t_{s1}$, $t_1^e=t_{s2}$. Otherwise, $t_1^s=t_{s1}$, $\forall i>1: t_i^s=(y+i-1,1,1)$, $\forall i<n: t_i^e=(y+i,1,1)$, $t_n^e=t_{s2}$ is the minimalist solution. In any case, $y_i=y+i-1$. Furthermore define
\begin{equation}
    \delta_i=\frac{t^e_{i}-t_i^s}{(y_i+1,1,1)-(y_i,1,1)}.
\end{equation}
With this notation, we finally compute the parameter function. Let $I(y_i)$ stand for the total number of immigrants between $(y_i,1,1)$ and $(y_i+1,1,1)$, then
\begin{equation}i:(t,\Delta t)\mapsto i(t,\Delta t)=\sum_{i=1}^{n}\delta_iI(y_i).\end{equation}

Similar to the initialisation of the start population, sampling of birth-date and sex is a two step process: $f_5(t)=f_{51}(t)\circ f_{52}$. Let $I(y_i,s,a)$ stand for the total number of immigrants with sex $s$ and age $a$ within year $y_i$, then
\begin{equation}f_{51}:(t,\Delta t)\mapsto (s,a)=(X,Y)\text{    with    }Pr(X=s,Y=a|t)=\frac{\sum_{i=1}^{n}{\delta_i\frac{I(y_i,s,a)}{I(y_i)}}}{\sum_{i=1}^{n}\delta_i}.\end{equation}
Furthermore $f_{52}$ is equivalent with $f_{12}$.

Analogous to the start population, we couple $i$ and $f_{51}$ in the newest version of GEPOC ABM. Hereby the mentioned stochastic rounding becomes even more valuable since numbers for very old and very young immigrants can become quite small.

\noindent\paragraph{Function $f_4$.} This function is used to sample the biological sex of a newborn \pa. It is modeled as a random boolean decision:

\begin{equation}
f_4:t\mapsto f_4(t)=s=\left\lbrace\begin{array}{l}\text{male},\text{  if } U<\frac{B(y,\text{male})}{B(y)}\\ 
\text{female}, \text{  else.}\end{array}\right.
\end{equation}
Hereby, $B(y,m)$ corresponds to the number of newborn males in the course of year $y$ (See Definition \ref{def:sex} for interpretation of the \textit{sex} variable). Typically, this fraction is rather country specific and is very stable with time. Therefore, in the current model version, it is parametrise it by one constant value $\alpha_m$:
\[\forall y: \frac{B(y,\text{male})}{B(y)}\approx \alpha_m.\]

\noindent\paragraph{Functions $b,d,e$.}
To parametrise the parameter functions $b,d$ and $e$, we have to deal with time-intervals again, yet in contrast to function $i$ we do not have to deal with potential problems caused by the outermost step-size $\Delta i$ due to their definition as 
\begin{center}
    \textit{probability, that the corresponding event occurs to someone with sex $s$ who aged $a$ in year $y$ before the persons $a+1$-st birthday.}
\end{center}
Since such numbers are often given directly by census institutions and are provided on yearly basis, we define
\begin{equation}
    \left(\begin{array}{c}d(t,s,a)\\
    e(t,s,a)\\b(t,a)\end{array}\right)=\Psi\left(\begin{array}{c}D^p(y,s,\min(a,a_{max}))\\E^p(y,s,\min(a,a_{max}))\\
    B^p(y,\min(a,a_{max})\end{array}\right)
\end{equation}
whereas $B^p,D^p,E^p$ stand for the corresponding parameter value valid between $(y,1,1)$ and $(y+1,1,1)$, and $a_{max}$ is the highest single-age class regarded by the model parameters. Function $\Psi$ represents the correction transformation from Theorem \ref{thm:aposterior_aprior} (we refer to Section \ref{sec:aposterior_aprior} for details) and removes bias due to simultaneous events. Note that sex is no argument in $B(\cdot,\cdot)$, since this process only targets female {\pa}s (See Definition \ref{def:sex} for interpretation of the \textit{sex} variable).

\noindent\paragraph{Functions $f_2$ and $f_3$.}
These two functions are simple but not trivial helper routines to compute agent-specific variables related to the agent's birth-date $bd$, birth-day, and the current time $t$.

First $f_2$ computes the agent's current age $a$ in years. We get
\begin{equation}
f_2:(bd,t)\mapsto f_2(bd,t)=a = f_2(bd,t_{s})=\left\lbrace\begin{array}{l}y-y_{bd},\quad m_{bd}<m\\ 
y-y_{bd},\quad m_{bd}=m\wedge d_{bd}\leq d\\
y-y_{bd}-1,\quad m_{bd}=m\wedge d_{bd}>d\\
y-y_{bd}-1,\quad m_{bd}>m\end{array}\right.
\end{equation}
This cumbersome computation is due to the fact that ``year'' is not a proper time unit (leap-days/seconds) and it becomes even more cumbersome, if hours, minutes, \dots are regarded as well.

The second routine $f_3$ computes the position of the current time within the birthday cycle of the agent. The output is a two element vector, whereas the first entry gives the time difference between the current date and the {\pa}s last birthday and the second entry gives the time difference between the agents next birthday and now.  Given the current age $a$ of the agent at time $t$, we get $bd_{age+1}:=(y_{bd}+age+1,m_{bd},d_{bd})$ as the next and $bd_{age}:=(y_{bd}+age,m_{bd},d_{bd})$ as the prior birth-day with $bd_{age}\leq t<bd_{age+1}$ .
Consequently we get
\[f_3:(bd,t)\mapsto f_3(bd,t)=(t-bd_{age},bd_{age+1}-t).\]

\paragraph{Stochastic Rounding $[\cdot]_s$.} In particular, when small numbers are scaled down and rounded, we issue numerical problems since we round to $0$ disproportionately often.\footnote{We give an example for this problem considering a total population $P(t)=1000$ and using a scaling factor $\sigma=100$. Clearly we expect that $10$ agents in total are generated by the model. Furthermore, the population is split into $100$ age-groups, containing $10$ persons each. Down-scaling $10$ by $\sigma$ would result in $0.1$ agents per age-group. Using classic rounding, we would initialise $0$ agents for every age-group, which results in a (wrong) total agent population of zero.}

To solve this problem, we introduce the following stochastic rounding strategy $[\cdot]_s$:
\begin{equation}[x]_s:=X,\text{ with } Pr(X=\lfloor x\rfloor+1)=x-\lfloor x\rfloor,\quad Pr(X=\lfloor x\rfloor)=1-(x-\lfloor x\rfloor)\end{equation}
That means, the probability that a number is rounded up is its decimals.

This way, sums of rounded summands is, in expectation, equivalent with the rounded sum. Therefore, this strategy helps preventing de-aggregation problems, as the one introduced earlier.

\subsubsection{Summary: Model Parameters}
\label{sec:parameters}
In this section we summarise the parameters used in the model and hereby display the demand for parametrisation. Note that we do not specify how the corresponding parameter values can be found.\footnote{Therefore this section is not called ``Input Data'' as recommended by the ODD protocol} General parameters are found in Table \ref{tbl:params1}, demographic parameters in Table \ref{tbl:params2}.
\begin{table}[H]
    \caption{General parameters / model input of GEPOC ABM.}
    \label{tbl:params1}
    \begin{center}
\begin{tabular}{p{5cm}|c|c|c|p{3cm}}
\hline
Parameter & Dimensions & Unit & Parameter Space   & Interpretation\\
\hline
$t_0=y_0\text{-}m_0\text{-}d_0TH_0\text{:}M_0\text{:}s_0$ & - & date-time & date-time-space &start date-time of the simulation\\
\hline
$\Delta t_i$ & $i=1,\dots,m$ & seconds & $\mathbb{R}^+/\{0\}$ &  time-tick lengths\\
\hline
$t_e=y_e\text{-}m_e\text{-}d_eTH_e\text{:}M_e\text{:}s_e$ $:=y_0\text{-}m_0\text{-}d_0TH_0\text{:}M_0\text{:}s_0+ \sum_{i=1}^m \Delta t_i$& - & date-time & date-time-space & end date-time of the simulation\\
\hline
$\sigma$ & - & - & $\mathbb{R}^+/\{0\}$ &  scaling factor of the simulation\\
\hline
\end{tabular}
\end{center}
\end{table}

\begin{table}[H]
    \caption{Demographic parameters of GEPOC ABM. See Definition \ref{def:sex} for interpretation of the \textit{sex} variable.}
    \label{tbl:params2}
    \begin{center}
\begin{tabular}{p{1.5cm}|p{5.5cm}|c|c|p{4cm}}
\hline
Parameter & Dimensions & Unit & P. Space   & Interpretation\\
\hline
$\alpha_m$ & - &  probability & $[0,1]$  & probability for male \pa at birth\\
\hline
$a_{max}$ & - & years & $\mathbb{N}/\{0\}$ & maximum age regarded in the parameters\\
\hline
$P(y,s,a)$ & $y\in \{y_0,\dots,y_e\}$, $a\in \{0,\dots, a_{max}\}$, $s\in \{\text{male},\text{female}\}$ & persons & $\mathbb{N}\cup\{0\}$ &  total population per age $a$, sex $s$ at the start of year $y$.\\
\hline
$I(y,s,a)$ & $y\in \{y_0,\dots,y_e\}$, $a\in \{0,\dots, a_{max}\}$, $s\in \{\text{male},\text{female}\}$ & persons & $\mathbb{N}\cup\{0\}$ &  total immigrants with age $a$ (at time of immigration), sex $s$ in the course of year $y$.\\
\hline
$D^p(y,s,a)$ & $y\in \{y_0,\dots,y_e\}$, $a\in \{0,\dots, a_{max}\}$, $s\in \{\text{male},\text{female}\}$ & probability & $[0,1]$ &  Probability of a person with sex $s$, who has had its $a$-th birthday in year $y$, to die before its $a+1$-st birthday.\\
\hline
$E^p(y,s,a)$ & $y\in \{y_0,\dots,y_e\}$, $a\in \{0,\dots, a_{max}\}$, $s\in \{\text{male},\text{female}\}$ & probability & $[0,1]$ &  Probability of a person with sex $s$, who has had its $a$-th birthday in year $y$, to emigrate before its $a+1$-st birthday.\\
\hline
$B^p(y,s,a)$ & $y\in \{y_0,\dots,y_e\}$, $a\in \{0,\dots, a_{max}\}$, $s\in \{\text{male},\text{female}\}$ & probability & $[0,1]$ &  Probability of a female person, who has had her $a$-th birthday in year $y$, to give birth to a child before her $a+1$-st birthday. This probability must compensate for multiple-births which are not depicted in the model.\\
\hline
\end{tabular}
\end{center}
\end{table}

\newpage
\section{GEPOC ABM Geography - Model Definition}
\label{sec:geographymodel}
\important{GEPOC ABM Geography is a direct extension of Extending GEPOC ABM, as defined in Section \mbox{\ref{sec:basemodel}}, by regional features}. This extension comes with various challenges regarding parametrisation.
We will build on the existing blocks of the ODD protocol from Section \ref{sec:basemodel} and extend and/or modify accordingly.
\subsection{Overview}

\subsubsection{Purpose and Patterns}
We may use this model extension for any kind of research question related to regional distribution and regional change of the population. GEPOC ABM Geography does not depict internal migration processes which poses a clear limitation for its applicability for dynamic research problems.

\subsubsection{Entities, State Variables and Scales}
\label{sec:geography_entities}
In addition to date of birth and biological sex, {\pa}s are given a static
\begin{itemize}
    \item \important{geographical coordinate, in form of longitude and latitude,}
\end{itemize}
which models the {\pa}'s point of residence. We henceforth use variables $(long, lat)$ to describe it.

\important{Given a certain regional-level we can match this point uniquely to a regional identifier. That means we can match the {\pa} uniquely to a certain city, municipality, district, \dots.} To formalise this principle, we introduce the following two definitions.

\definition{Regional-Level, Region-Family and Region-Mapping}{def:regional}{
A family of sets $(A^r_i)_{i=1}^{q}$, which
\begin{itemize}
    \item are $\subset \mathbb{R}^2$,
    \item have finite area,
    \item are pairwise disjoint, and
    \item cover, in total, the full area of interest,
\end{itemize}
is furthermore called \textbf{\regionfamily} and is identified by its joint \textbf{\regionallevel} $r$.
Due to the properties of the family we can define the unique \textbf{\regionmapping}
\begin{equation}\phi:(long, lat,r)\mapsto \phi(long, lat,r) := [i \Leftrightarrow (long, lat)\in A_i^r].\end{equation}
which maps a geo-coordinate to the index, furthermore called \textbf{\regionid}, of the region of the family in which it lies in.
}
\newpage 
\definition{Fineness}{def:fineness}{A \regionallevel $r$ is said to be \textbf{finer} than a \regionallevel $r'$ if $\forall i\in\{1,\dots,q_{r'}\}$ there exists a subset $J_i\subseteq \{1,\dots,q_{r}\}$ so that
\begin{equation}
    A^{r'}_i=\dot{\bigcup_{j\in J}}A^r_j,
\end{equation}
and at least one of the $J_i$ has more than one element.
}
Note that this definition of fineness generates a half order, but not a full order, on the set of all possible {\regionfamilies}. For example, while municipalities are strictly finer than political districts in Austria, they cannot be compared to ZIP codes.

\subsubsection{Process Overview and Scheduling}
Since the model does not add any new processes, the general model update strategy is completely unchanged. Yet, certain parameter functions use additional input variables and generate additional outcomes.

\paragraph{Simulation layer.}

On simulation layer, most importantly, function $f_1^{new}$ with
\begin{equation}
    f^{new}_{1}(t):=g_1(f_{1}(t),t) = (age,s,long, lat)
\end{equation}
replaces $f_1$ and now returns a third and fourth output: longitude $long$ and latitude $lat$ of the \pa residence. It hereby uses \regionallevel
\begin{equation}
    r_{0},\text{   with  \regionfamily } (A_i^{r_0})_{j=0}^{q_i},
\end{equation}
given by the parametrisation of the model.
\important{The corresponding sampling algorithm is the heart of the geography extension and is, in detail, explained in Section {\ref{sec:parameter_functions_geo}}. It is, in general, split into two steps: First a specific region (\mbox{\regionid}) from the \mbox{\regionfamily} is drawn. In a second step, a coordinate within the region is sampled.}

Anyway, the sampled coordinate is, along with birth-date and sex passed on as third and fourth parameter to \textit{Interface $A_2$} and, consequently, the \textit{Init} event of the new created \pa. Analogously, also \textit{Inferface $D$} and the \textit{Add} event are extended to four parameters.

\paragraph{\Personagent layer.} 

\important{The {\pa} is initialised with two additional arguments \textit{long} and \textit{lat}. While these two arguments are not directly influential for the dynamics, they yet imply \mbox{\regionids} for specific \mbox{\regionallevels} which are used to compute the event-probabilities. We define}
\begin{equation}
    r_{d},r_{e},r_b,\text{   and   } (A_i^{r_d})_{i=1}^{q_d},(A_i^{r_e})_{i=1}^{q_e},(A_i^{r_b})_{i=1}^{q_b}
\end{equation}
\important{as \mbox{\regionallevels} and corresponding \mbox{\regionfamilies} used as spatial resolution to compute death, emigration and birth probabilities. That means, the regional affiliation of an agent w.r. to these regions is relevant for computing the corresponding event probability for death, emigration or birth.} They can, but do not need to differ and should be chosen suitable for the quality and resolution of parametrisation data.

With these identifiers, we replace the death, emigration and birth probability parameter function $d,e,b$ as follows:
\begin{align}
  d(t,s,age)&\rightarrow d_{new}(t,\phi(long,lat,r_d),s,age),\\
  e(t,s,age)&\rightarrow e_{new}(t,\phi(long,lat,r_e),s,age),\\
    b(t,age)&\rightarrow b_{new}(t,\phi(long,lat,r_b),age).
\end{align}

Finally, the \textit{Birth} event automatically generates the corresponding newborn agent at the same location as the mother \pa. That means, $long$ and $lat$ are inherited to the {\pa}'s offspring.

\paragraph{\Interfaceagent layer.} 

Analogous to the simulation layer, also the \interfaceagent layer uses a changed parameter function
\begin{equation}
    f^{new}_{5}(t,\Delta t):=g_5(f_{5}(t,\Delta t),t,\Delta t) = (age,s,long,lat)
\end{equation}
and passes four instead of two parameters to the corresponding \pa interface. It hereby uses the \regionallevel
\begin{equation}
    r_{i},\text{   with  \regionfamily   } (A_j^{r_i})_{j=1}^{q_i}.
\end{equation}

\subsection{Design Concepts}
\subsubsection{Basic Principles}
The key principle for the geographical extension of GEPOC ABM is to extend the state of every \pa by the two additional variables latitude and longitude, and to use these two variables to compute various regional identifiers. This strategy was chosen in favour adding regional identifiers directly as an attribute, since it is more robust w.r. to extensions and to temporally changing regional structures.

\subsection{Details}
\subsubsection{Submodels}
\label{sec:parameter_functions_geo}
\paragraph{Functions $g_1$ and $g_5$.}
Given the output of $f_1$, i.e. a sampled age and sex, function $g_1$ uses a two step strategy to sample a statistically representative location. i.e. $g_1(t,s,a)=g_{11}\circ g_{12}$.

First of all, $g_{11}$ uses the \regionallevel $r_0$ specified for initialisation to sample a statistically representative region from $(A^{r_0}_i)_{i=1}^{q_0}$. Let $t\cong(y,m,d)$ and $P(y,i,s,a)$ stand for the total population of region $A^{r_0}_i$ with age $a$ and sex $s$ at the beginning of year $y$, then
\begin{equation}
    g_{11}(t,s,a) = (t,X,s,a),\text{   with   } Pr(X=i)=\left\lbrace\begin{array}{l}\frac{P(y,i,s,a)}{P(y,s,a)},\text{   if   } (t-(y,1,1))<((y+1,1,1)-t), \\ \frac{P(y+1,i,s,a)}{P(y+1,s,a)},\quad \text{otherwise.}\end{array}\right.
\end{equation}
In the newest versions of GEPOC ABM Geography, we use census data directly to create statistically representative agents. I.e. for every age $a\in \{0,\dots,a_{max}\}$, sex $s\in\{\text{male},\text{female}\}$ and region $i\in\{0,\dots,q_0\}$ we create
\begin{equation}
    [P(y,i,s,a)/\sigma]_{s}
\end{equation}
agents with the corresponding features (See Definition \ref{def:sex} for interpretation of the \textit{sex} variable). 

It remains to sample a random birth-date (via $f_{12}$, see Section \ref{sec:parameter_functions}) and a coordinate via $g_{12}$.

In the second step, a coordinate within region $A^{r_0}_i$ is drawn. This process founds on the one presented in \cite{bicher_gepoc_2018} and in \cite{gallagher2018spew} (Section 3.3.1), and is extended by a rejection-method using a much finer set-family $(A^{r_{min}}_i)_{i=1}^{q_{min}}$ with $q_{min}\gg q_0$, and a labelling function
    \begin{equation}
        \psi: \{0,\dots,q_{min}\}\rightarrow \{\text{true},\text{false}\}: j\mapsto \psi(j).
    \end{equation}
which labels, whether the corresponding fine-resolved region is inhabited. Typical candidates for the fine regional resolution are raster maps, which are labelled for being inhabited via satellite images and machine-learning.

Furthermore, the algorithm for computing $g_{12}$ is described in two steps:
\begin{enumerate}
\item Draw a uniformly distributed point $(long,lat)$ within the region $A_i^{r_0}$, which was chosen to become the residence region for the agent. For this we may exemplary use the algorithm presented in \cite{bicher_gepoc_2018} based on triangulation. 
\item Furthermore calculate $\phi(long,lat,r_{min}))$ to find, in which of the regions from $(A^{r_{min}}_i)_{i=1}^{q_{min}}$ the point lies in. In case the result of
\[\psi(\phi(long,lat,r_{min})))\]
is true, and the sampled point lies in an inhabited region, the algorithm terminates and $(long,lat)$ is returned. Otherwise, repeat with step 1.
\end{enumerate}
This algorithm extends the one presented in \cite{bicher_gepoc_2018} by the rejection strategy in step 2 and drastically improves the quality of the result. In Figure \ref{fig:points_new}, left, we see 1 Million agents sampled based on municipality data in Austria with the old algorithm from \cite{bicher_gepoc_2018}. Using the Global-Human-Settlement raster layer\cite{ghs_hp} for the rejection strategy presented here, we receive the right part of Figure \ref{fig:points_new}. Comparisons with e.g. satellite images of Austria at night reveal, that this image much more properly represents the topological structure of Austria, in particular with respect to the Alps in the west. A further refinement of the coordinate sampling $g_{12}$ is currently in progress. Here, OpenStreetMap building data will be used to obtain a set of building coordinates for each region, together with an approximate probability distribution for the likelihood of a person from that region living in a given building. The coordinates of each region will then be sampled from this distribution.

Finally, the strategy is analogously extended to compute $g_{5}$, which is the function used to sample residence places for immigrants.

\begin{figure}[H]
\begin{center}
\includegraphics[width=0.4\linewidth]{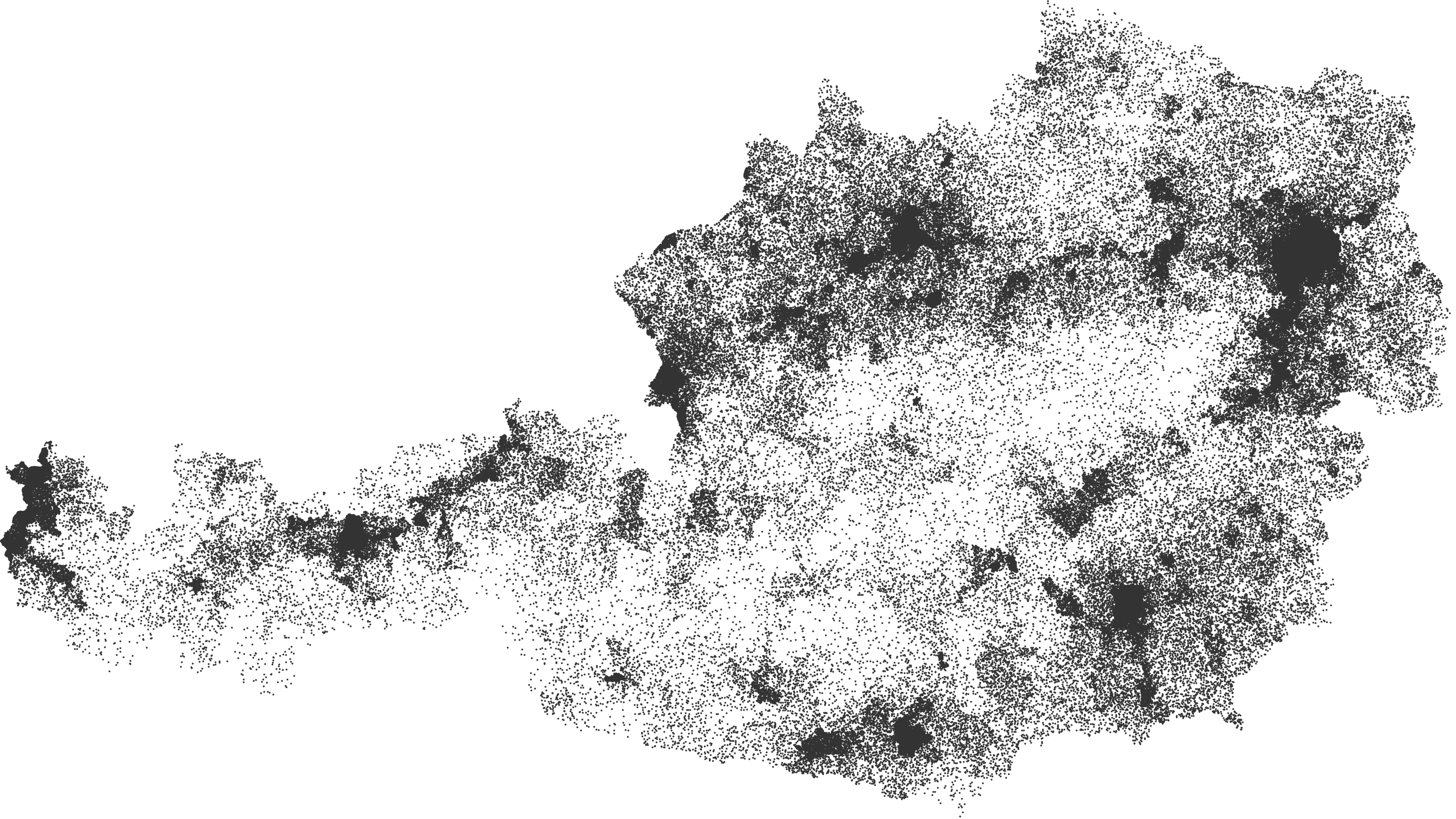}
\includegraphics[width=0.4\linewidth]{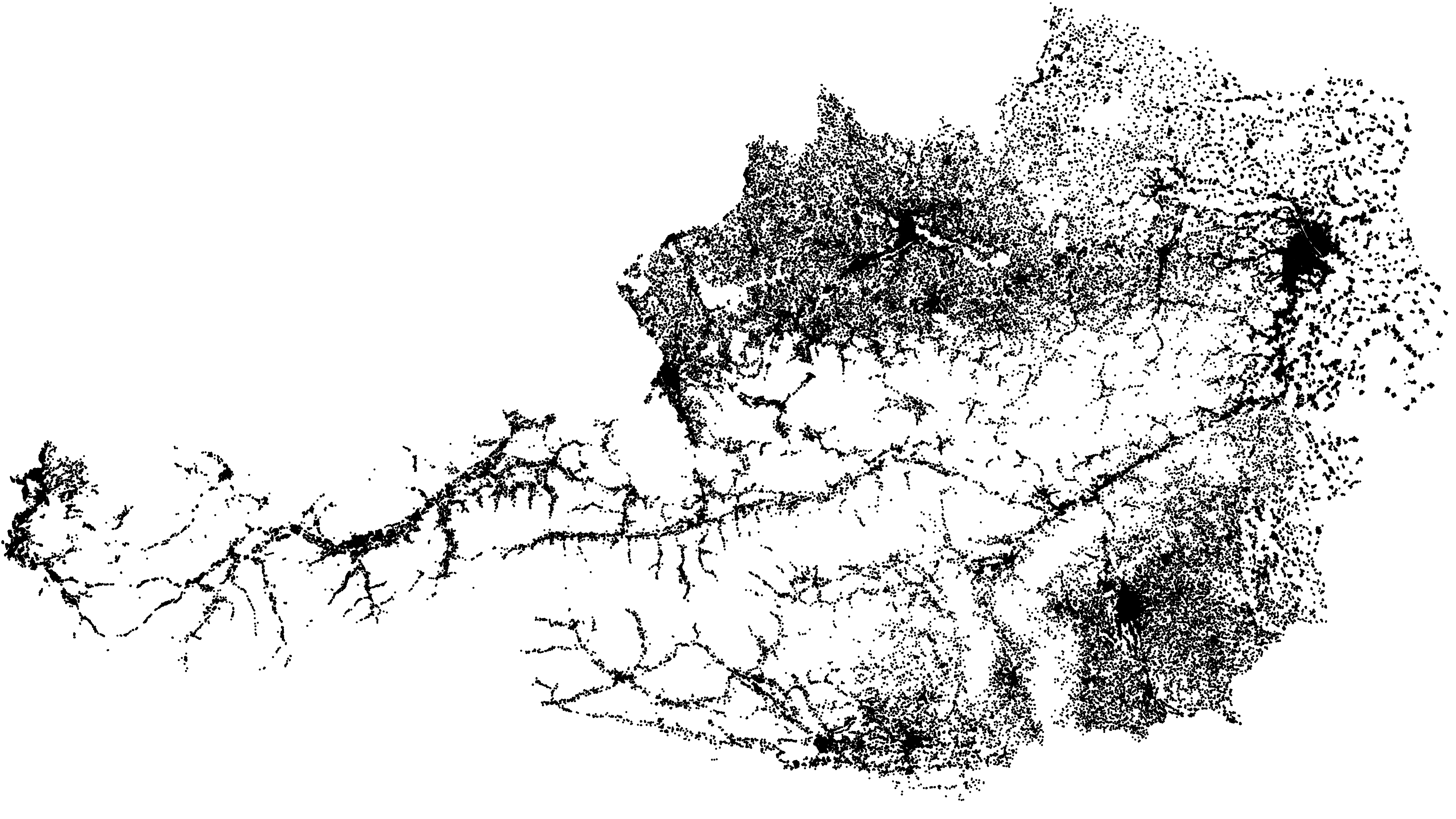}
\caption{Left, sampled residences for $1M$ agents according to the distribution for municipalities in Austria (as $r_{0}$) without rejection sampling, i.e. using only step 1 in the presented algorithm, right, with rejection-sampling method with the Global Human Settlement layer\cite{ghs_hp} as $r_{min}$, i.e. iterating steps 1 and 2 as specified in the presented algorithm. Inhabited and uninhabited regions are much accurately displayed.}
\label{fig:points_new}
\end{center}
\end{figure}

\paragraph{Functions $d_{new},e_{new}$ and $b_{new}$.}
The parameter functions for computing the probabilities are extended accordingly by an additional spatial parameter. With $t=(y,m,s)$,
\begin{equation}
    \left(\begin{array}{c}d_{new}(t,i,s,a)\\
    e_{new}(t,i,s,a)\\b_{new}(t,i,a)\end{array}\right)=\Psi\left(\begin{array}{c}D^p(y,i,s,\min(a,a_{max}))\\E^p(y,i,s,\min(a,a_{max}))\\
    B^p(y,i,\min(a,a_{max})\end{array}\right)
\end{equation}
whereas $D^p,E^p,B^p$ stand for the corresponding parameter value for year $y$, \regionid $i$, sex $s$ and age $a$, and function $\Psi$ corrects the bias due to simultaneous events (see Theorem \ref{thm:aposterior_aprior}). Note that different \regionallevels could be used for parametrisation, the parameter values must be brought to a common finest level before applying $\Psi$ though. 

\subsubsection{Area-Status}
Looking at the continuous change of the political landscape it is worth mentioning that regional set-families might only be valid for a certain time-frame. E.g. in 2015 various districts and municipalities in Austria were fused due to administrative reasons.


Since dynamically adapting to different regional structures would be too difficult w.r. to implementation and parametrisation (e.g. parameter tables are no longer ``rectangular''), we use a static concept: We fix one so called \textit{area-status} (\textit{Gebietstand}, in German) for the simulation, meaning that the spatial reference of the simulation is always given by this sole regional structure -- input and output. Since regional structures are usually updated on yearly basis, we typically identify the status with the year for which it is valid. Problem for this strategy is, that all parameters and consequently all parametrisation data must be given in reference to this status -- independent of the time component of the parameter. For example, in a simulation between 2010 and 2040 with area-status 2020, all parameters for all years must be specified for the regions valid for 2020. 

Luckily many official statistics institutions offer demographic data in which the information is given specifically for the most up-to-date area-status -- in retrospective and in forecasts. As a result, the current GEPOC ABM Geography version uses this strategy.

\subsubsection{Summary: Model Parameters}
\label{sec:parameters_geo}
We conclude the definition of this model extension by giving an update of the parameter tables introduced in Section \ref{sec:parameters}. Again, we do not specify how the corresponding parameter values can be found. General parameters are unchanged compared to Table \ref{tbl:params1}, demographic parameters are found in Table \ref{tbl:params2_geo}.

\begin{table}[H]
\caption{Demographic parameters of GEPOC ABM Geography. See Definition \ref{def:sex} for interpretation of the \textit{sex} variable.\vspace{2mm}}
\label{tbl:params2_geo}
\begin{center}
\begin{tabular}{p{1.7cm}|p{5.7cm}|c|c|p{4cm}}
\hline
Parameter & Dimensions & Unit & P. Space   & Interpretation\\
\hline
$\alpha_m$ & - &  probability & $[0,1]$  & probability for male \pa at birth\\
\hline
$a_{max}$ & - & years & $\mathbb{N}/\{0\}$ & maximum age regarded in the parameters\\
\hline
$r_x$ & $x\in\{0,d,e,b,i,min\}$& name & various & \regionallevels used for initialisation, death, emigration, birth and immigration processes.\\
\hline
$A_j^{r_x}$ & $x\in\{0,d,e,b,i,min\}, j\in \{1,\dots,q_x\}$& $\{(long,lat)\}$ & $\subset \mathbb{R}^2$ & Specification of the \regionfamilies matching to the specified \regionallevels with a suitable area-status.\\
\hline
$P(y,i,s,a)$ & $y\in \{y_0,\dots,y_e\}$, $i\in \{1,\dots, q_0\}$, $\quad a\in \{0,\dots, a_{max}\}$, $s\in \{\text{male},\text{female}\}$ & persons & $\mathbb{N}\cup\{0\}$ &  total population per region $A_i^{r_0}$, age $a$, sex $s$ at the start of year $y$.\\
\hline
$I(y,i,s,a)$ & $y\in \{y_0,\dots,y_e\}$, $i\in \{1,\dots, q_i\}$, $\quad a\in \{0,\dots, a_{max}\}$, $s\in \{\text{male},\text{female}\}$ & persons & $\mathbb{N}\cup\{0\}$ &  total immigrants to region $A_i^{r_i}$ with age $a$ (at time of immigration), sex $s$ in the course of year $y$.\\
\hline
$D^p(y,i,s,a)$ & $y\in \{y_0,\dots,y_e\}$, $i\in \{1,\dots, q_d\}$, $\quad a\in \{0,\dots, a_{max}\}$, $s\in \{\text{male},\text{female}\}$ & probability & $[0,1]$ &  Probability of a person with sex $s$ living in region $A_i^{r_d}$, who has had its $a$-th birthday in year $y$, to die before its $a+1$-st birthday.\\
\hline
$E^p(y,i,s,a)$ & $y\in \{y_0,\dots,y_e\}$, $i\in \{1,\dots, q_e\}$, $\quad a\in \{0,\dots, a_{max}\}$, $s\in \{\text{male},\text{female}\}$ & probability & $[0,1]$ &  Probability of a person with sex $s$ living in region $A_i^{r_e}$, who has had its $a$-th birthday in year $y$, to emigrate before its $a+1$-st birthday.\\
\hline
$B^p(y,i,s,a)$ & $y\in \{y_0,\dots,y_e\}$, $i\in \{1,\dots, q_b\}$, $\quad a\in \{0,\dots, a_{max}\}$, $s\in \{\text{male},\text{female}\}$ & probability & $[0,1]$ &  Probability of a female person living in region $A_i^{r_b}$, who has had her $a$-th birthday in year $y$, to give birth to a child before her $a+1$-st birthday. This probability must compensate for multiple-births which are not depicted in the model.\\
\hline
\end{tabular}
\end{center}
\end{table}
It is not necessarily relevant to explicitly parametrise all \regionfamilies $(A_i^{r_x})_{i=1}^{q_x}$ for the \regionallevels $x\in\{d,e,b,min\}$ explicitly, e.g. via raster maps or borders coordinates, since we do not sample points inside them (in contrast to $x\in\{0,i\}$). It is sufficient to quantify the mapping functions $\phi(long,lat,r_{x})$. The latter can be  simplified drastically and might not even require additional input data, if the different regional levels can be ordered w.r. to fineness (compare Definition \ref{def:fineness}).

For example, consider Austrian \regionallevels $r_0=\text{municipalities}$ and $r_d=\text{districts}$. Since the first three digits of the five digit \regionid of a municipality region is precisely the \regionid of the district, we do not need any additional information to compute $\phi(long,lat,r_d)$ from $\phi(long,lat,r_0)$.

\newpage
\section{GEPOC ABM Internal Migration - Model Definition}
\label{sec:immodel}
For long-range simulations GEPOC ABM Geography will always cause deviations for the regional age distributions primarily due to missing countryside$\leftrightarrow$city migration. To overcome this problem, we developed GEPOC ABM Internal Migration, henceforth short GEPOC ABM IM, as an \important{extension of GEPOC ABM Geography, defined in Section \mbox{\ref{sec:geographymodel}}}. 

\important{In the following we will define not one but three different models for internal migration which differ in strategy for location-sampling and parametrisation:}
\begin{itemize}
    \item \important{\textbf{Interregional model,}}
    \item \important{\textbf{Biregional model,}}
    \item \important{\textbf{Full Regional model,}}
\end{itemize}
compare with \cite{obermair2018different}. We will explain the three models at once building and extending the existing blocks of the ODD protocol from Section \ref{sec:geographymodel}.

\subsection{Overview}
\subsubsection{Purpose and Patterns}
In contrast to GEPOC ABM Geography we may also use this model extension for any kind of research question related to long-term regional change of the population and to investigate problems specifically related or caused by internal migration. Note, that this model is not intended to replace GEPOC ABM Geography since it is computationally more costly, structurally more complex and requires more parametrisation data.

\subsubsection{Entities, State Variables and Scales}
There are no changes to entities, variables and scales compared to GEPOC ABM Geography.

\subsubsection{Process Overview and Scheduling}
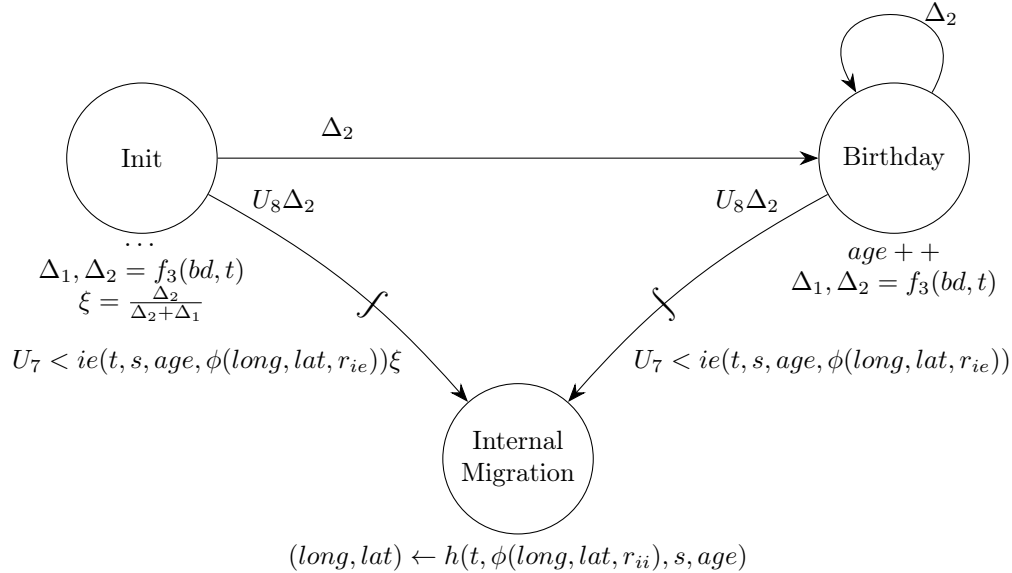
\begin{figure}[H]
    \centering
    \begin{tikzpicture}[
eventtype/.style = {circle, draw, minimum width = 2 cm,text=black,draw=black, align = center},
interface/.style = {rectangle, draw, minimum width = 2 cm,text=black, align = center},
scheduleplain/.style = {-{Stealth[scale=1.5]}},
nodeif/.style = {pos=0.66,anchor=center, circle, minimum width=0.5cm},
nodetime/.style = {pos=0.05,anchor=center},
nodearg/.style = {pos=0.4,anchor=center, rectangle, minimum width=0.5 cm,fill=white,draw},
]

\node[eventtype,label={[align=center]below:\dots\\$\Delta_1,\Delta_2=f_3(bd,t)$\\$\xi=\frac{\Delta_2}{\Delta_2+\Delta_1}$}] (init) at (0,0) {Init};

\node[eventtype,label={[align=center]below:$age++$\\$\Delta_1,\Delta_2=f_3(bd,t)$}] (bd) at (10,0) {Birthday};
\draw (bd) edge[scheduleplain,out = 60, in = 120, looseness = 4] node[nodetime,pos=0.2,label ={above:$\Delta_2$}]{} (bd);

\node[eventtype,label = {below:$(long,lat)\leftarrow h(t,\phi(long,lat,r_{ii}),s,age)$}] (im) at (5,-4) {Internal\\Migration};

\draw (bd) edge[scheduleplain,bend right = 10] node[nodetime,pos=0.2,label ={above:$U_8\Delta_2\quad~$}] {} node[nodeif,pos=0.85,label ={right:$U_7<ie(t,s,age,\phi(long,lat,r_{ie}))$}]{} node[pos=0.6,rotate=50] {\large{$\int$}} (im);

\draw (init) edge[scheduleplain,bend left = 10] node[nodetime,pos=0.2,label ={above:$\quad U_8\Delta_2$}] {} node[nodeif,pos=0.85,label ={left:$U_7<ie(t,s,age,\phi(long,lat,r_{ie}))\xi$}]{} node[pos=0.6,rotate=330] {\large{$\int$}} (im);

\draw (init) edge[scheduleplain] 
node[nodetime,pos=0.2,label={above:$\Delta_2$}]{}
(bd);
\end{tikzpicture}
    \caption{Internal-migration related snippet of the \personagent-layer of the time-update concept of GEPOC ABM IM using event-graph-like notation. Functions $f_i$ are explained in the text.}
    \label{fig:update_concept_pa_im}
\end{figure}
\paragraph{\Personagent layer.} While the simulation layer and the \interfaceagent layer remain entirely unchanged, the \pa layer is extended by one additional process: internal migration.

This is displayed in Figure \ref{fig:update_concept_pa_im}, which extends Figure \ref{fig:update_concept_pa} by the \textit{Internal Migration} event. 

Analogous to the \textit{Death}, \textit{Emigration}, and \textit{Birth} event it is scheduled randomly comparing a $U(0,1)$ random number with the value of a parameter function. Using the notation and definitions of Section \ref{sec:geography_entities}, let $r_{im}$ define the \regionallevel used for internal (e)migration with \regionfamily $(A_i^{r_{im}})_{i=1}^{q_{im}}$, then
\[ie(t,s,age,\phi(long,lat,r_{im}))\]
stands for the probability, that a person with sex $s$ living in region with \regionid $\phi(long,lat,r_{im})$, who turned $age$ at time $t$, moves due to internal migration before its $age+1$-st birthday. \important{We call this process ``internal emigration''.} Note that we do not specify that the person needs to leave the region. It is possible to internally migrate within the same region.

In case the agent is selected for internal migration, function $h$ samples a new residence, a process we usually call \important{``internal immigration''}. This is done in two steps: $h=h_{1}\circ g_{12}$. Function $h_{1}$ randomly samples a new region of residence for the agent given the current time its age, sex and current residence region. Thereafter, function $g_{12}$, which is the rejection-sampling algorithm introduced in Section \ref{sec:parameter_functions_geo}, draws a random new residence location in form of longitude and latitude.

\important{Dependent on the used model, $h_{1}$ differs:}
\begin{itemize}
    \item \textbf{Interregional model.} Sampling of a new region depends on time, region of origin and sex:
    \[h_{1}(t,\phi(long,lat,r_{im}),s,age):=h^{ir}_{1}(t,\phi(long,lat,r_{im}),s).\]
     \item \textbf{Biregional model.} Sampling of a new region depends on time, sex and age:
    \[h_{1}(t,\phi(long,lat,r_{im}),s,age)):=h^{br}_{1}(t,s,age).\]
    \item \textbf{Full Regional model.} Sampling of a new region depends on time, region of origin, sex and age:
    \[h_{1}(t,\phi(long,lat,r_{im}),s,age):=h^{full}_{1}(t,\phi(long,lat,r_{im}),s,age).\]
\end{itemize}

\subsection{Design Concepts}
\subsubsection{Basic Principles}
While the general logic of the IM extension of GEPOC ABM is easy to understand -- it simply adds one additional demographic process -- reasoning for development of three different models is required. There are two main motivations for this:
\begin{itemize}
    \item \important{availability of parametrisation data, and}
    \item \important{size of parametrisation data.}
\end{itemize}
Suppose one is keen to parametrise the Full Regional model, one needs to setup a probability distribution for all possible regions a \pa can move into, i.e. $q_{im}$, for any given region of origin, sex, age and simulation time/simulation year. We give an example for the sheer amount of data required this way: Say $r_{im}$ is set to Austrian municipalities with $q_{im}\approx 2000$, we choose $a_{max}=100$ and aim to have a stable parametrisation for $50$ years, then we would require
\[q_{im}\cdot q_{im}\cdot |\{0,\dots,a_{max}\}|\cdot |\{\text{male},\text{female}\}|\cdot |\{y_0,\dots,y_{49}\}|\approx 2000\cdot 2000\cdot 101\cdot 2 \cdot 50 = 4.04\cdot 10^{10}\]
data points to fully parametrise the model. Collecting $40$ billion valid data points is not only a huge task for the parametrisation but also for keeping the data in the memory.

As a workaround, we may leave out one of the costly dimensions: The Interregional model leaves out the age variable, which reduces the data requirement by one hundredth, the Biregional model neglects the dependency of the origin region, which, on the given example, reduces the requirement by one two-thousandth. 

So far we were not been able to successfully parametrise and compute the Full Regional model, but the two reduced models. The model simplification, of course, comes with a price regards validity: While the Interregional model perfectly depicts the flows between the different regions, it does not correctly depict the age structure which will lead to demographic age-deviations in the long run. In the contrast, the Biregional model perfectly depicts the age structure of the internal immigrants, but does not correctly model the flows between the regions. This will lead to correct demographic development based on potentially wrong migration processes. Therefore the user has to decide, which version of validity is more relevant for the specific use case.

\subsection{Details}
\subsubsection{Submodels}
\label{sec:parameter_functions_im}
We furthermore explain the used parameter functions in detail and connect with the parameterisation.
\paragraph{Function $ie$.}
This parameter function is defined analogously to all other probabilities in Section \ref{sec:parameter_functions_geo}:
\begin{equation}
    \left(\begin{array}{c}d_{new}(t,i,s,a)\\
    e_{new}(t,i,s,a)\\b_{new}(t,i,a)\\ie_{new}(t,i,s,a)\end{array}\right)=\Psi\left(\begin{array}{c}D^p(y,i,s,\min(a,a_{max}))\\E^p(y,i,s,\min(a,a_{max}))\\
    B^p(y,i,\min(a,a_{max})\\IE^p(y,i,s,\min(a,a_{max}))\end{array}\right)
\end{equation}
Hereby, $IE^p(y,i,s,a)$ stands for the probability that a person with sex $s$, living in region $A_i^{r_{im}}$, and aged $a$ in the course of year $y$, emigrates internally before the person's $a+1$-st birthday, and function $\Psi$ corrects the bias due to simultaneous events (see Theorem \ref{thm:aposterior_aprior}).

\paragraph{Functions $h_1^{ir},h_1^{br}$ and $h_1^{full}$.}
We furthermore define:
\begin{align}
M^p(y,i,s,a,j)\quad\dots & \text{  Pr. of an emigrant from $i$ (sex $s$, age $a$) to migrate to $j$ during $y$.}\\
II^p(y,j,s,a):=\sum_{i}M(y,i,s,a,j)\quad\dots & \text{   Pr. of an emigrant (sex $s$, age $a$) to migrate to $j$ during $y$,}\\
OD^p(y,i,s,j):=\sum_{a}M(y,i,s,a,j)\quad\dots & \text{   Pr. of an emigrant from $i$ (sex $s$) to migrate to $j$ during $y$.}
\end{align}
Hereby, $II^p$ can be regarded as immigration probability into a certain region, and $OD$ can be interpreted as origin-destination flow probability between the regions.

As usual, let $t=(y,m,d)$, then
\begin{align}
h^{ir}_{1}(t,i,s)=X&\text{   with   } Pr(X=j|t,i,s)=OD^p(y,i,s,j),\\
h^{br}_{1}(t,s,a)=X&\text{   with   } Pr(X=j|t,s,a)=II^p(y,j,s,min(a,a_{max})),\\
h^{full}_{1}(t,i,s,a)=X&\text{   with   } Pr(X=j|t,i,s,a)=M^p(y,i,s,min(a,a_{max}),j).
\end{align}

\subsubsection{Mixing Strategies}
The three models introduced clearly open the ideas to be mixed, in particular when different \textit{regional-levels} are used. For example, one may use a fine regional-level for sampling internal emigration, a coarse regional-level to sample a rough immigration region with the Full Regional model, and, again, a fine regional-level with the Biregional model to refine the sampled region. 

In case the Full Regional model is out of scope w.r. to gathering parametrisation data, we may also investigate
\begin{equation}
    ii(t,i,s,a)=X\text{   with   }Pr(X=j|t,i,s,a)=F(II^p(y,j,s,a),OD^p(y,i,s,j))
\end{equation}
with some function $F$ combining the two probabilities, e.g. via a linear combination. Unfortunately, the minimalist example in Appendix \ref{sec:minimodel} shows that it is not so simple. Both models, the Biregional and the Interregional, fulfil a certain perspective of validity. The former conserves the correct regional age-distributions, the latter conserves the correct migration flows. Unfortunately, using a simple function $F$ such as a multiplication or linear combination, neither of the two constraints can be conserved.

Generalising the equations derived in Appendix \ref{sec:minimodel} we find the necessary requirements for function $F$ to conserve both constraints. Let $II^p$ stand for the internal immigration probabilities (with age resolution) and $OD^p$ stand for the origin destination probabilities (without age resolution), then both can be joined to create probabilities $\tilde{M}^p(y,i,s,a,j)$ which maintain both constraints by solving a series of under-determined linear problems: For every required year $y$ and sex $s$, find $\tilde{M}^p: 0\leq \tilde{M}^p\leq 1$ so that
\begin{equation}
    \forall j\in\{1,\dots,q_{im}\},a\in \{0,\dots,a_{max}\}:\sum_{i=1}^{q_{im}}\frac{IE(y,s,i,a)}{\sum_{k=1}^{q_{im}} IE(y,k,s,a)}\cdot \tilde{M}^p(y,i,s,a,j)=II^p(j,a),
\end{equation}
\begin{equation}
    \forall i,j\in \{1,\dots,q_{im}\}:\sum_{a=0}^{a_{max}}\frac{IE(y,i,s,a)}{\sum_{b=0}^{a_{max}} IE(y,i,s,b)}\cdot \tilde{M}^p(y,i,s,a,j)=OD^p(y,i,s,j).
\end{equation}
\begin{equation}
    i\in\{1,\dots,q_{im}\},a\in \{0,\dots,a_{max}\}:\sum_{j=1}^{q_{im}}\tilde{M}^p(y,i,s,a,j)=1.
\end{equation}
The corresponding problem has $q_{m}^2\cdot (a_{max}+1)$ degrees of freedom and $2q_{m}\cdot (a_{max}+1)+q_{m}^2$ equations. As seen on the minimalist example in Section \ref{sec:minimodel}, the problem is heavily under-determined and large: With the aformentioned example for Austria, i.e. using $q_{im}\approx 2000$ municipalities and $a_{max}=100$, we would need to find $2000^2\cdot 101=404M$ parameter values based on $2\cdot 2000\cdot 101+2000^2=4 404 000$ constraint equations.

This computation is clearly not suitable for simulation run-time since it is by itself a huge challenge for even the most powerful linear programming solvers, but it could be used in a pre-processing step to find a plausible parametrisation for the Full Regional model. A heuristics was already able find a solution to the problem on the district level in Austria with around $1$ Million parameter values and $10000$ equations.

\subsubsection{Summary: Model Parameters}
\label{sec:parameters_im}
We conclude the specification of this model extension by giving an update of the parameter tables introduced in Section \ref{sec:parameters_geo}. Again, we do not specify how the corresponding parameter values can be found. General parameters are unchanged compared to Table \ref{tbl:params1}, general demographic parameters without internal migration are found in Table \ref{tbl:params2_geo}, internal migration parameters are found in Table \ref{tbl:params2_im}.

\begin{table}[H]
\begin{tabular}{p{1.9cm}|p{5.7cm}|c|c|p{4cm}}
\hline
Parameter & Dimensions & Unit & P. Space   & Interpretation\\
\hline
$r_{im}$ & - & name & various & regional-level used for internal migration.\\
\hline
$A_j^{r_{im}}$ & $j\in \{1,\dots,q_{im}\}$& $\{(long,lat)\}$ & $\subset \mathbb{R}^2$ & Specification of the regional set-families for internal migration.\\
\hline
$IE(y,i,s,a)$ & $y\in \{y_0,\dots,y_e\}$, $i\in \{1,\dots, q_0\}$, $\quad a\in \{0,\dots, a_{max}\}$, $s\in \{\text{male},\text{female}\}$ & probability & $[0,1]$ &  Probability of a person with sex $s$ living in region $i$, who has had its $a$-th birthday in year $y$, to emigrate internally before its $a+1$-st birthday.\\
\hline
\multicolumn{5}{c}{Interregional model}\\
\hline
$OD(y,i,s,j)$ & $y\in \{y_0,\dots,y_e\}$, $i\in \{1,\dots, q_{im}\}$, $s\in \{\text{male},\text{female}\}$, $j\in \{1,\dots, q_{im}\}$ & persons & $\mathbb{N}\cup\{0\}$ &  total migrants from region $i$ to $j$ with sex $s$ in the course of year $y$.\\
\hline
\multicolumn{5}{c}{Biregional model}\\
\hline
$II(y,j,s,a)$ & $y\in \{y_0,\dots,y_e\}$, $j\in \{1,\dots, q_{im}\}$, $s\in \{\text{male},\text{female}\}$, $a\in \{0,\dots, a_{max}\}$ & persons & $\mathbb{N}\cup\{0\}$ &  internal immigrants into region $j$ with sex $s$ and age $a$ in the course of year $y$.\\
\hline
\multicolumn{5}{c}{Full Regional model}\\
\hline
$M(y,i,s,a,j)$ & $y\in \{y_0,\dots,y_e\}$, $i\in \{1,\dots, q_{im}\}$, $s\in \{\text{male},\text{female}\}$, $a\in \{0,\dots, a_{max}\}$, $j\in \{1,\dots, q_{im}\}$ & persons & $\mathbb{N}\cup\{0\}$ &  internal migrants from region $i$ into $j$ with sex $s$ and age $a$ in the course of year $y$.\\
\hline
\end{tabular}
    \centering
    \caption{Internal migration parameters of GEPOC ABM IM dependent of the use migration model. See Definition \ref{def:sex} for interpretation of the \textit{sex} variable.}
    \label{tbl:params2_im}
\end{table}
In contrast to regional-levels for e.g. birth or emigration it is relevant to explicitly parametrise the $(A_i^{r_{im}})_{i=1}^{q_{r_{im}}}$ since we need to sample points inside the regions them.

\newpage
\section{GEPOC ABM Contact Location - Model Definition}
\label{sec:contactmodel}
\important{GEPOC ABM Contact Location, henceforth GEPOC ABM CL, extends GEPOC ABM Geography, see Section \mbox{\ref{sec:geographymodel}}, by features related to agent-agent contacts.}

\subsection{Overview}
\subsubsection{Purpose and Patterns}
Key purpose of this model extension is to be a foundation for models relying on in-person contacts between {\pa}s, for example, epidemiological models. Note that the model itself does not sample any contacts but provides an underlying contact network as a basis for them. We will give some hints on how to model contacts using the defined network in Section \ref{sec:contactmodel_interaction}.

\subsubsection{Entities, State Variables and Scales}

In addition to the two agent types \pa and \interfaceagent introduced earlier, \important{we add two new passive agent types to the model:} \footnote{We are aware, that many modellers would not consider passive entities as ``agents''. Since we think of potential model extensions, in which the entities might gain active roles, we stick with this notation though.}
\begin{itemize}
    \item \important{\mbox{\location}, and}
    \item \important{\mbox{\locationcollection}.}
\end{itemize}
The former models a place where {\pa}s meet, the latter models a place which summarises {\location}s and works as a platform for additional {\pa} contacts in between the different summarised {\location}s. Typical examples of {\location}s are households, school-classes, workplaces, whereas classic examples for {\locationcollection}s are schools, company-buildings or care-homes. \important{As hinted by these examples, it is possible to use multiple different sub-types of \mbox{\location} or \mbox{\locationcollection} in the model. They might also come with different features and constraints. Nevertheless, the base sampling-mechanism for the network and the general parametrisation concept is the same for all of them.}

The \location agent has four states, namely
\begin{itemize}
    \item a set $P_{loc}$ of {\pa}s assigned to the \location,
    \item $long$, the longitude of the {\location}'s position, and
    \item $lat$, the latitude of the {\location}'s position, and
    \item $\vec{c}\in(\mathbb{N}\cup \{0\})^K$, referring to a {\location}'s vector of initial \pa capacities for agents with respect to $K$ different predefined characteristics $k$ with mappings $\Lambda_k,k\in\{1,\dots,K\}$ (compare Definition \ref{def:characteristic}).
\end{itemize}
The latter is solely used within the initialisation process of the model and specifies how many agents with which characteristic are scheduled for the specific location. After successful initialisation (see below) we would have $(\vec{c})_k=\sum_{a\in P_{loc}}\Lambda^k(a(t_0))$.
\newpage
\example{Household initialisation with characteristics.}{households}{The idea behind the usage of characteristics in this process is best explained with an example: Suppose the \location agents are used to model households, we might require to depict a correct age and sex distribution of the households as given by the census data. We might introduce 
\begin{align}
\Lambda^{1}(a(t)):=1_{age<18}(a(t))\\
\Lambda^{2}(a(t)):=1_{18\leq age<65,sex=f}(a(t)),\quad &\Lambda^{3}(a(t)):=1_{18\leq age<65,sex=m}(a(t))\\
\Lambda^{4}(a(t)):=1_{65\leq age,sex=f}(a(t)),\quad &\Lambda^{5}(a(t)):=1_{65\leq age,sex=m}(a(t)).
\end{align}
With these characteristics defined, according to Austrian data (Statistics Austria), about $30\%$ of all households should consist of one adult male and female, i.e. $\vec{c}=(0,1,1,0,0)^T$, around $13\%$ of two opposing sex elderly, i.e.  $\vec{c}=(0,0,0,1,1)^T$, and about $12\%$ of one adult male alone, i.e. $\vec{c}=e_2$. Furthermore, about $15\%$ or all households have children and two opposite sex parents, i.e.$\vec{c}=(\geq 1,1,1,0,0)^T$. (See Definition \ref{def:sex} for interpretation of the \textit{sex} variable)}

The \locationcollection is a special type of \location and has the following four features:
\begin{itemize}
    \item a set of {\location}s assigned to the \locationcollection,
     \item $lat$, the latitude of the {\locationcollection}'s position, and
      \item $long$, the longitude of the {\locationcollection}'s position,
     \item $\vec{c}\in (\mathbb{N}\cup \{0\})^J$, referring to the {\locationcollection}'s vector of initial \location agent capacities with respect to $J$ different \location agent types capable for being assigned to the \locationcollection type.
\end{itemize}
Since the \locationcollection is regarded as a special type of \location agent, it is possible to define a \locationcollection agent which is assigned a set of other \locationcollection agents.

The last feature is analogous to the capacity feature of the \location agent but refers to (potentially) multiple types of \location agents.
\example{School as location-collection.}{school}{As before, the idea is best explained with an example: Suppose the \locationcollection agents are used to model schools, then the included \location agents could model school-classes consisting of pupils but also one or more workplaces for teachers. Correspondingly the capacity vector may be two dimensional and e.g. $\vec{c}=(20,1)^T$ states that the \locationcollection should contain 20 school-class {\location}s and one workplace \location. The pupil {\pa}s and the teacher {\pa}s may all interact with each other beyond their own school-class and work-place \location via the \locationcollection environment on a less frequent basis.
}

\important{In this specification of GEPOC ABM CL we do not regard any down-or up-scaling of the model. I.e. for this original specification $\sigma=1$.} The reason for this lies in intrinsic problems to scale contact networks in general. Suppose, with $\sigma=1.0$ there are $50$ school-classes with $20$ pupils. How many classes and pupils-per-class are correct for $\sigma=10$ to conserve the network features? Neither $50/2$ nor $5/20$ (nor anything in between) would provide qualitatively equivalent results as the original version e.g. if the locations were used for contacts in an epidemics model.

\subsubsection{Process Overview and Scheduling}
\label{sec:contacts_scheduling}
\textbf{Simulation layer.} The most relevant changes for the simulation dynamics are found on the simulation layer and, in specific, in the initialisation loop (compare Figure \ref{fig:update_concept_sim}). In this particular case, having an Event Graph representation would not be helpful, since the dynamics of the model are, in principle, unchanged, and the additions to the initialisation process are too complicated to be described in this fashion. Thus we describe the changed dynamics in textual form in temporal order, i.e. in the order/sequence in which they are executed by the simulation. Note, that they all take place as a part of the initialisation process at $t=0$ and in between the initialisation of the \pa population and the first \textit{Observe event} (see Section \ref{sec:basemodel} for details). 

\important{Generating of \mbox{\location} and \mbox{\locationcollection} agents starts directly after the \mbox{\pa} population is generated. Clearly, assignment of a certain type of \mbox{\locationcollection} agents can only be started if the member \mbox{\location}/ \mbox{\locationcollection} populations have been generated before.} The last restriction creates a natural hierarchy on the different types of \mbox{\location} and \mbox{\locationcollection} agents.

\important{Generation of each \mbox{\location}/\mbox{\locationcollection} population is a two step process. In the first step we initialise a certain number of \mbox{\location}/\mbox{\locationcollection} agents, and sample residence coordinates and initial capacities. In a second step, the corresponding \mbox{\pa}/\mbox{\location} agents are assigned to them in a ``filling'' process.} For the first part, we require, again, a \regionallevel and a corresponding \regionfamily to create and distribute the initial locations. For the second part, we require origin-destination information to setup the regional assignment of agents. For details we refer to Section \ref{sec:contactmodel_initialisation}.

\textbf{\Personagent layer.} The concept of contact {\location}s per-se does not influence the \pa dynamics, yet it might be required to change \location assignments on run-time due to agent behaviour. \important{First of all, if the \mbox{\pa} emigrates or dies, it has to be removed from all \mbox{\location} agents which it is assigned to in the course of the \mbox{\textit{Remove}} event} (see Figure \ref{fig:update_concept_pa}). This might result in new \location agents with empty \pa-set. So the modeller should be careful when attempting to draw {\pa}s from arbitrary locations. \important{Furthermore, immigrated and newborn \mbox{\pa} must be added to \mbox{\location} agents in the course of the {\textit{Add}} event} (see Figure \ref{fig:update_concept_sim}). We do not specify how this should be done since the process depends on the specific contact place and purpose depicted by the \location type. \important{Dependent on the specific application, also other events might require to change the network affiliation of agents.} Examples are the {\textit{Internal Migration}} event introduced in GEPOC ABM IM which causes the agent to find a new place to live. Also the \textit{Birthday Event} might trigger a change of the network since it causes the agent to age by one year.

\subsection{Design Concepts}
\subsubsection{Basic Principles}
There are two main ideas behind the concept of creating contact networks via locations.

The first one is the motivation to model and investigate human behaviour within different settings whereas we are also capable of interfering with the setting itself. E.g. workplace-{\location}s could be set \textit{closed} due to enforced home-office or lock-downs (compare with \cite{Bicher_Rippinger_Urach_Brunmeir_Siebert_Popper_2021}). The second idea is motivated by parametrisation considerations. Typically, human contacts in agent-based models are modelled using a random scale-free network such as the Barabasi-Albert Graph\cite{albert2002statistical}. Unfortunately, it is very difficult (if not impossible) to find parameters for designing and parametrising a scale-free spatial network which depicts heterogeneous regional features such as high/low inter-connectivity (good/bad public transport) or population structure (e.g. age-structure). Using the proposed approach, the modeller can make use of statistical census data (e.g. labour, school, household statistics, etc) to gather the required information, i.e. the regional number of locations, the characteristics of the individuals within, and the origin-destination map for assigning the {\pa}s. Anyway, if parameter values are selected/collected properly, the network will show features of a small-world/scale-free network, in particular, if the modeller follows the contact generation concepts described below in Section \ref{sec:contactmodel_interaction}.

\subsubsection{Interaction}
\label{sec:contactmodel_interaction}
The described model extension provides a proper basis for modelling human-human interaction in GEPOC ABM, yet does not specify how contacts are actually handled. In this section, we will summarise the most important concepts from \cite{Bicher_Rippinger_Urach_Brunmeir_Siebert_Popper_2021} to give the reader some ideas how contacts can be modelled based on the given location-based network.
\paragraph{Gamma-Poisson Mix.} Skewness and high clustering is one of the most important features of a realistic human contact network. Since the underlying network via locations itself does not include a mechanism for adding heterogeneity within the contact behaviour of the individual {\pa}s, skewness/dispersion can be not originate from the network of potential contact partners, i.e. the number and members of the locations assigned to the agent, but solely from the daily number of drawn contacts on model run-time (e.g. for spreading a disease it is not relevant how many people you known, but how many and how often you meet them). In \cite{Bicher_Rippinger_Urach_Brunmeir_Siebert_Popper_2021} this heterogeneity is modeled via a scalar \textit{contactivity} $c$ parameter as additional \pa parameter, which models the agents personal appeal to have many contacts. In the study the parameter value is initialised randomly in the initialisation process of the \pa by sampling a gamma distributed random variable with mean $1$. The value of the second free parameter of the gamma distribution is calibrated to a measured dispersion factor from a published study (\cite{Adam_Wu_Wong_Lau_Tsang_Cauchemez_Leung_Cowling_2020}). Furthermore, on runtime, a Poisson distribution is used to sample the actual contacts per time-step. Given the average number of contacts $n$ per time-step within a specific location type from a parameter file, the model would draw $[Poi(c\cdot n)]$ not necessarily distinct contact partners from the \pa set of the assigned \location agent. The agent would furthermore generate contact events with every one of them.

\paragraph{Contact Events.} Since it is highly recommended that every agent can only change its own states, we advise to add \textit{Contact Events} into the event-queues of all sampled contact partners. Since the correct state of both agents can only be ensured at the times, the overall model is in sync, contacts can only be planned and executed at the discrete point in time $t_i$. Therefore the event should always be scheduled in the course of the \textit{Time-Step Planning} events of the agents and added to the event queues of the contact partners without any additional delay.

\paragraph{Contacts within {\locationcollection}s.} To specify, how many contacts take place in between {\location}s inside of a \locationcollection agent, \cite{Bicher_Rippinger_Urach_Brunmeir_Siebert_Popper_2021} used a scalar probability parameter within the \locationcollection agent. Any contact drawn within any of the {\location}s summarised in the \locationcollection is instead drawn from the joint set of all {\pa}s of all summarised {\location}s with the specified probability.

\subsection{Details}

\subsubsection{Initialisation}
\label{sec:contactmodel_initialisation}
In this section we take a deeper look into the initialisation of the \location and \locationcollection agents and explain the mentioned two-step process:
\paragraph{Initialisation of \location agents.} Analogous to the {\pa}s, we initialise the \location agents using a specific \regionallevel $r_l$ with \regionfamily $(A^{r_l})_{i=1}^{q_{l}}$ and a corresponding parameter-vector $R(i),i=1\dots,q_l$, which contains the total number of contact locations within region $A^{r_l}_i$:
    \begin{equation}
        i\in 1,\dots, q_l:\quad R_i=|\{\text{\location}\in A^{r_l}_i\}|.
    \end{equation}
\begin{itemize}
    \item We iterate over all $q_l$ regions and accordingly create the proper amount of \location agents. For each of them, we sample random residence coordinates inside the region. If the spatial distribution of the locations is equivalent or very similar to the distribution of persons (which depends on the used location type and/or the modelling purpose) this can be done analogously to the \pa as defined in Section \ref{sec:parameter_functions_geo}, i.e. using an even finer settlement map.
    \item In addition to the initial coordinate, we sample an initial capacity. With the given $K$ characteristics we assign $\vec{c}$ by drawing from the discrete distribution $Pr(\vec{c}=\vec{X}),\quad \vec{X}\in (\mathbb{N}\cup\{0\})^K$. Note, that for feasible parameter values, this distribution will always have a finite support and it should be possible to parametrise it properly with data.
\end{itemize}
\paragraph{Filling of \location agents.} Key for assigning \pa to {\location}s is an origin destination map $OD$, a static analogue to the one presented in GEPOC ABM IM (Section \ref{sec:immodel}). It matches the \regionallevel $r_l$ and states, how many individuals from each region (origin) are assigned to locations in each other region (destination). With
    \[\forall i\in \{1,\dots,q_l\}:\quad Pr(X=i|j)=\frac{OD(i,j)}{\sum_{i'=1}^{q_l}OD(i',j)}\]
    we get a discrete distribution for the origin region of a \pa who is to be assigned to a \location in region $A^{r_l}_j$.
\begin{itemize}
    \item In the first step, we create a map of unassigned {\pa}s and put them into bins according to their regional identifier and their characteristics:
    \[\forall i\in \{1,\dots,q_l\}, k\in \{1,\dots,K\}: G(i,k):=\{\pa:\phi(long,lat,r_{l})=i wedge \Lambda^k(\pa)=1\}.\]
    We will use this map to draw {\pa}s according to sampled origin region and characteristic.
    \item Furthermore, we iterate over all created {\location}s and over all characteristics $k\in\{1,\dots,K\}$. If the planned capacity $c_k$ of the \location agent is not zero, we try to assign accordingly many {\pa}s using the following system:
    \begin{enumerate}
        \item Let $j$ stand for the region identifier of the \location agent and investigate the set $I:=\{i\in\{1,\dots,q_r\}:OD(i,j)>0\}$ of all \regionids that specify a potential origin region for $j$ with positive probability. If 
        \[\bigcup_{i\in I}G(i,j)=\emptyset\]
        we will not be successful in finding a \pa with the required characteristic from a potential origin region. Therefore, we break the loop and continue with the next characteristic/\location-agent. Otherwise we continue with step 2.
        \item Draw a random origin region $i$ using the specified discrete distribution $Pr(X=i|j)$. If $G(i,k)=\emptyset$ continue with 2, otherwise continue with 3. Note that step 1 ensures that the algorithm will eventually find an origin region $i$ with non-empty set $G(i,k)$.
        \item Pick and remove a random \pa from $G(i,k)$ and assign the agent to the \location agent.
    \end{enumerate}
\end{itemize}
Note, that the mapping $G$ must be recreated for every new \location type since {\pa}s can, of course, be assigned to multiple contact locations at once. 
\paragraph{Initialisation and filling of \locationcollection agents.}
Creation and filling of \locationcollection agent works analogous to the one of \location agents. Instead of looping over the characteristics, a loop over the suitable \location agent types is performed.

\important{This initialisation strategy might lead to under-full or even entirely empty \mbox{\location}s or \mbox{\locationcollection}s (due to step 1 in the filling process) if parametrisation or source data for parametrisation is simplified or flawed.} Since this might cause problems, we recommend to remove entirely empty {\location}s and {\locationcollection}s from the model before continuing with the initialisation of the next type or starting with the model dynamics.

\subsubsection{Summary: Model Parameters}
\label{sec:parameters_cl}
We conclude the specification of this model extension by showing which additional parameter values are needed to create a contact network based on \location and \locationcollection agents. For every type of \location and \locationcollection agent we require the parameters shown in Table \ref{tbl:params1_cl}. As for the other models, we do not specify how the corresponding parameter values can be found.

\begin{table}[H]
\caption{Parameters required for every \location and \locationcollection agent type in GEPOC ABM CL.}
    \label{tbl:params1_cl}
    \begin{center}
\begin{tabular}{p{2.3cm}|p{5cm}|c|c|p{4.0cm}}
\hline
Parameter & Dimensions & Unit & P. Space   & Interpretation\\
\hline
$r_{l}$ & - & name & various & regional-level used for contact location.\\
\hline
$A_j^{r_{l}}$ & $j\in \{1,\dots,q_{l}\}$& $\{(long,lat)\}$ & $\subset \mathbb{R}^2$ & Specification of the regional set-families for random sampling of the contact location.\\
\hline
$R_j$ & $j\in \{1,\dots,q_{l}\}$ & locations & $\mathbb{N}\cup\{0\}$ &  Number of locations in region $A_j$.\\
\hline
$OD(i,j)$ & $i,j\in \{1,\dots,q_{l}\}$ & persons & $\mathbb{N}\cup\{0\}$ &  Number of \pa/\location agents in region $A_i$ assigned to a \location/\locationcollection in region $A_j$.\\
\hline
\multicolumn{5}{c}{\location only}\\
\hline
$K$ & - & number & $\mathbb{N}$ & A number of characteristics to distinguish when assigning {\pa}s to the \location agent.\\
\hline
$\Lambda^k$ & $k\in\{1,\dots,K\}$ & $\pa(t) \mapsto \Lambda^k(\pa(t))$ & $S \rightarrow \{0,1\}$ & Set of $K$ characteristic mappings to distinguish if an \pa has the characteristic (1) or not (0).\\
\hline
$Pr(\vec{c}=\vec{X}|j)$ & $r\in \{1,\dots,q_l\},\vec{X}\in (\mathbb{N}\cup\{0\})^K$ & probability & $[0,1]$ & Discrete distribution, how many agents with which characteristics are planned to be assigned to a location. The spatial resolution $j$ is optional.\\
\hline
\multicolumn{5}{c}{\locationcollection only}\\
\hline
$J$ & - & number & $\mathbb{N}$ & A number of \location agent types which should be assigned to the specific \locationcollection type.\\
\hline
$Pr(\vec{c}=\vec{X}|j)$ & $j\in \{1,\dots,q_l\},\vec{X}\in (\mathbb{N}\cup\{0\})^J$ & probability & $[0,1]$ & Discrete distribution, how many {\location}s from which of the $J$ location types are assigned to the \locationcollection. The spatial resolution $j$ is optional.\\
\hline
\end{tabular}
\end{center}
\end{table}

\newpage
\section{A-Posterior to A-Prior Probabilities}
\label{sec:aposterior_aprior}
A given probability $X^p$ (with $X\in \{E,B,D\}$) from a census bureau would, in principle, be well suited to be used as a probability in GEPOC ABM if the model would only regard one single mechanism (e.g. birth, death \underline{or} emigration). Yet the simultaneous presence of all three mechanisms causes a bias. 

\subsection{Motivation}
To make this problem clear, we introduce two models for the same system: In both cases the model returns a number between $0$ and $N>1$.
\begin{model}[Model 1]
Let $P_i,i\in \{1,\dots,N\}$ stand for the so called \textit{a-posterior} probability that an event with type $i$ occurs and let $P=\sum_{i=1}^N P_i$ be the overall probability of an event which we assume to be smaller than one. First, a Bernoulli experiment draws a random number $X$ which is equal to $1$ with probability $P$. In case $X=1$, an element $Y$ from $\{1,\dots,N\}$ is drawn with the discrete distribution $P(Y=i)=\frac{P_i}{P}$ and returned, otherwise the model returns $0$.
\end{model}

\begin{model}[Model 2]
Let $p_1,i\in \{1,\dots,N\}$ be \textit{a-prior} probabilities and sample $N$ random numbers $(X_i)_{i=1}^{N}$ with values in $\{0,1\}$ whereas $P(X_i=1)=p_i$. Furthermore, define the index set $I=\{i\in \{1,\dots,N\}:X_i=1\}$. If $I\neq \emptyset$, then a random index $i$ of $I$ is picked and returned, otherwise the model returns zero.
\end{model}

Model 1 uses a very natural parametrisation since the probabilities can be calculated from observations, since $P(\text{Model 1}=i)=P_i$. Therefore the output-probability matches the given input probability. This is not the case for Model 2 since we need to investigate the conflicting co-scheduling of any two events. 

In case we aim that both models lead to the same results, the following Corollary holds for $N=2$.
\begin{corollary}[A-Prior vs. A-Posterior ($N=2$)]\label{aprior_aposterior_n2}
We find the relations
\[P_1 = p_1(1-\frac{1}{2}p_2),\]
\[P_2 = p_2(1-\frac{1}{2}p_1),\]
and
\[p_1=1+\frac{P_1-P_2}{2}-\sqrt{1-P+\frac{(P_1-P_2)^2}{4}},\]
\[p_2 =1+\frac{P_2-P_1}{2}-\sqrt{1-P+\frac{(P_1-P_2)^2}{4}},\]
to guarantee that, in probability, Model 1 and Model 2 give the same results.
\end{corollary}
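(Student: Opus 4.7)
The plan is to proceed by direct computation in Model 2, then invert the resulting system of two nonlinear equations.

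First, I would enumerate the outcome probabilities of Model 2 for $N=2$. Since $X_1,X_2$ are independent Bernoulli with parameters $p_1,p_2$, there are four joint outcomes. The event $\{\text{Model 2}=1\}$ is the disjoint union $\{X_1=1,X_2=0\}\cup\{X_1=1,X_2=1,\text{tie broken to }1\}$, contributing $p_1(1-p_2)+\tfrac12 p_1p_2 = p_1(1-\tfrac12 p_2)$. By symmetry, $P(\text{Model 2}=2)=p_2(1-\tfrac12 p_1)$. Equating these two probabilities with $P_1$ and $P_2$ gives the first pair of identities in the statement. (As a sanity check, summing yields $P=p_1+p_2-p_1p_2=1-(1-p_1)(1-p_2)$, which is exactly the probability that at least one event fires, as expected.)

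For the inverse, I would introduce the symmetric coordinates $s:=p_1+p_2$ and $d:=p_1-p_2$. Subtracting the two forward relations gives $P_1-P_2 = p_1-p_2 = d$ immediately, and adding them gives $P = s - p_1p_2 = s - \tfrac14(s^2-d^2)$. Rearranging produces the quadratic $s^2-4s+(4P-d^2)=0$, whose roots are $s = 2\pm 2\sqrt{1-P+d^2/4}$. Only the minus branch is admissible, since $p_1,p_2\in[0,1]$ forces $s\le 2$; this is the key sign choice in the argument. Substituting back via $p_1=(s+d)/2$ and $p_2=(s-d)/2$ yields exactly the two closed-form expressions claimed.

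The computations themselves are elementary; the only substantive point is the branch selection for the square root, which I would justify by the feasibility constraint $s\le 2$. I would also briefly note that the discriminant $1-P+\tfrac14(P_1-P_2)^2$ is non-negative whenever $P\le 1$ (which is assumed), so the inverse formulas are well defined on the admissible parameter range. No further machinery is needed, so I expect no real obstacle beyond bookkeeping.
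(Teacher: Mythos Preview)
Your proof is correct and follows essentially the same route as the paper: compute the forward relations by enumerating the four joint outcomes, subtract to obtain $p_1-p_2=P_1-P_2$, reduce to a quadratic, and discard the larger root. The only difference is cosmetic---you work in the symmetric coordinates $(s,d)=(p_1+p_2,\,p_1-p_2)$ where the paper substitutes $p_2=P_2-P_1+p_1$ directly into the first relation---and your branch-selection argument via $s\le 2$ is, if anything, slightly tidier than the paper's case split on the sign of $P_1-P_2$.
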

\begin{proof}
    We find that Model 2 returns 1 in precisely two cases: (1) the first Bernoulli experiment returns true while the second does not, or (2) both experiments return true and index 1 is chosen randomly from the set $\{1,2\}$ - which is a fair coin flip with chance $1/2$. The probability writes to
\[P_1=p_1(1-p_2)+1/2p_1p_2=p_1(1-\frac{1}{2}p_2),\]
analogous, $P_2$:
\[P_2=p_2(1-p_1)+1/2p_1p_2=p_2(1-\frac{1}{2}p_1).\]
Subtraction of the two equations leads
\[P_1-P_2=p_1-p_2\Rightarrow p_2=P_2-P_1+p_1.\]
Combining in combination with the first equation, we get
\[P_1=p_1(1-\frac{1}{2}(P_2-P_1+p_1))\Rightarrow p_1^2+p_1(P_2-P_1-2)+2P_1.\]
Solving the quadratic equation gives
\[\Rightarrow (p_1)_{1,2}=\frac{2+P_1-P_2}{2}\pm \sqrt{\frac{(2+P_1-P_2)^2}{4}-2P_1}.\]
Expanding the quadratic term and using $P_1+P_2=P$ the formula simplifies to
\[
\Rightarrow (p_1)_{1,2}=1+\frac{P_1-P_2}{2}\pm \sqrt{1-P+\frac{(P_1-P_2)^2}{4}.}
\]
Only the solution with "$-$" makes sense here: If $P_1>P_2$, then $1+\frac{P_1-P_2}{2}>1$ and adding the value of the root would make it even greater. This violates the condition for $p_1$ being a probability (i.e. $0\leq p_1\leq 1$). Otherwise, $1+\frac{P_1-P_2}{2}<1$, yet the value of the root is always greater than $|\frac{P_1-P_2}{2}|$ and adding it would also cause $p_1>1$.
\end{proof}
Considering the seemingly simple initial situation, the found solution is surprisingly complex. Hence, it is not surprising, that, so far, no analytic formula for $N>2$ could be found.

\subsection{Application in GEPOC ABM}
The subsystem of GEPOC ABM consisting of emigration and deaths is precisely like Model 2 with $N=2$: For both events a random process decides if the event will be scheduled in the course of a person-agents upcoming life-year. In case both are scheduled simultaneously, it is eventually a coin-flip, which of the two is scheduled earlier and will take place. The other one is cancelled since the agent is removed.
\newpage

Unfortunately, this does not only affect the death and emigration probabilities. Although the other events occurring in GEPOC ABM do not interfere with the death and emigration processes, they are implicitly influenced by them. We summarise the correct parameter-post-processing in the following theorem:

\begin{theorem}[A-Posterior to A-Prior (GEPOC)]\label{thm:aposterior_aprior}
With given a-posterior probabilities $D^p,E^p$ for emigration and death, and additional non-terminal probabilities $X_1^p,X_2^p,\dots,X_n^p$, e.g. for birth and internal migration, we define
\[\Psi:[0,1]^{n+2}\rightarrow [0,1]^{n+2}\] via
\begin{equation}\Psi_1(D^p,E^p,X_1^p,\dots)=1+\frac{D^p-E^p}{2}-\sqrt{1-(E^p+D^p)+\frac{(D^p-E^p)^2}{4}},\end{equation}
\begin{equation}\Psi_2(D^p,E^p,X_1^p,\dots) =1+\frac{E^p-D^p}{2}-\sqrt{1-(E^p+D^p)+\frac{(D^p-E^p)^2}{4}},\end{equation}
and $\forall 2<i\leq (n+2)$
\begin{equation}
\Psi_i(D^p,E^p,X_1^p,\dots)=\frac{X_i^p}{(1-D^p)(1-E^p)+\frac{1}{2}D^p(1-E^p)+\frac{1}{2}(1-D^p)E^p+\frac{1}{3}D^pE^p}.
\end{equation}

The resulting vector
\[(d,e,x_1,\dots x_n)=\Psi_i(D^p,E^p,X_1^p,\dots,X_n^p)\]
corresponds to the correct a-prior probabilities.
\end{theorem}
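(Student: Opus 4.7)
The plan is to treat the two terminal events (death and emigration) separately from the non-terminal events $X_i$, since only the terminal pair mutually cancel each other (and everything scheduled for later in the life-year). This separation splits $\Psi$ into two components whose derivations are essentially independent.

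First I would handle $\Psi_1$ and $\Psi_2$. Because the non-terminal events never cancel anything and are decided by Bernoulli variables ($U_3$, etc.) independent of $U_1,U_2$, the death/emigration subsystem coincides exactly with Model~2 for $N=2$. Substituting $P_1=D^p$, $P_2=E^p$ and $P=D^p+E^p$ into the inversion formulas of Corollary~\ref{aprior_aposterior_n2} yields $\Psi_1$ and $\Psi_2$ verbatim; no additional argument is needed for these two components.

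Next I would derive the formula for $\Psi_i$, $i>2$. Writing $d,e$ for the a-prior death/emigration probabilities (i.e.\ $\Psi_1,\Psi_2$) and $x_i$ for the a-prior probability of $X_i$, the event $X_i$ actually executes iff it is scheduled (independently, with probability $x_i$) and its uniform delay $U_{X_i}\Delta_2$ strictly precedes the execution time of whichever terminal event fires first. I would condition on the four disjoint scenarios for the scheduling of $D$ and $E$ — neither, only $D$, only $E$, both — whose probabilities are $(1-d)(1-e)$, $d(1-e)$, $(1-d)e$, and $de$. Independence and uniformity of $U_4,U_5,U_6$ on $[0,\Delta_2]$ give the conditional probability that $X_i$ wins the race against the terminal clocks as $1$, $\tfrac12$, $\tfrac12$, and $\tfrac13$ respectively, so
\[
X_i^p \;=\; x_i\bigl[(1-d)(1-e)+\tfrac12 d(1-e)+\tfrac12(1-d)e+\tfrac13 de\bigr],
\]
and solving for $x_i$ produces $\Psi_i$ after re-expressing the bracketed correction factor in the notation used in the theorem.

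The main obstacle is the fourth case: verifying that $\Pr(U_X<\min(U_D,U_E))=\tfrac13$ for three i.i.d.\ uniform variables on $[0,1]$. This follows either by symmetry — each of the three variables is equally likely to be the smallest — or by direct integration against the density $2(1-s)$ of $\min(U_D,U_E)$. A secondary structural point is to argue that different non-terminal events (say $X_i$ and $X_j$) do not interact with one another, so the $(n+2)$-dimensional problem decouples into the $N=2$ inversion for the terminal pair plus $n$ scalar inversions for the non-terminal coordinates; without this observation one would face a much messier joint analysis.
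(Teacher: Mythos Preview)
Your approach is essentially the same as the paper's: invoke Corollary~\ref{aprior_aposterior_n2} for $\Psi_1,\Psi_2$, then condition on the four trigger scenarios for $\Psi_i$ with $i>2$, using the $1,\tfrac12,\tfrac12,\tfrac13$ race factors. The decoupling observation and the symmetry argument for $\tfrac13$ are also implicit in the paper's argument.

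There is, however, a genuine gap in your final step. Your conditioning on whether death and emigration are \emph{triggered} correctly produces
\[
X_i^p = x_i\Bigl[(1-d)(1-e)+\tfrac12 d(1-e)+\tfrac12(1-d)e+\tfrac13 de\Bigr],
\]
with the a-prior probabilities $d,e$ in the bracket. But the theorem's $\Psi_i$ has the a-posterior $D^p,E^p$ in the denominator, and you cannot simply ``re-express'' one as the other: by Corollary~\ref{aprior_aposterior_n2}, $D^p=d(1-\tfrac12 e)\neq d$ in general, so $(1-d)(1-e)\neq (1-D^p)(1-E^p)$, etc. The paper's own proof glosses over exactly this point --- it conditions on ``triggered'' events yet writes the weights with $D^p,E^p$ without comment --- so in a sense your more careful derivation exposes a discrepancy in the stated formula rather than reproducing it. To match the theorem as written you would have to either justify the substitution $d\to D^p$, $e\to E^p$ (which does not hold exactly), or note explicitly that the difference is second order in the probabilities and thus negligible for realistic demographic values.
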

\begin{proof}
For the first part of the Theorem we directly apply Corollary \ref{aprior_aposterior_n2}. 
For the second part we investigate the a-posterior probability $X^p$ of a third event under the influence of death and emigration.

The event takes place
\begin{itemize}
    \item  with probability $X^p$, in case no death and no emigration occurs,
    \item  with probability $X^p/2$, in case death but no emigration is triggered (in 1/2 of the cases, the event is scheduled earlier than the death event),
    \item  with probability $X^p/2$, in case emigration but no death is triggered (in 1/2 of the cases, the event is scheduled earlier than the death event),
    \item  with probability $X^p/3$, in case death and emigration are triggered (in 1/3 of the cases, the $X$-event is scheduled earlier than the other two).
\end{itemize}
Summing up the cases with the corresponding probabilities leads
\[X^p=X^p\left((1-D^p)(1-E^p)+\frac{1}{2}(1-D^p)E^p+\frac{1}{2}D^p(1-D^p)+\frac{1}{3}D^pE^p\right).\]
The expression in the parenthesis is precisely the stated linear factor $C$ which can be divided to the left-hand side.
\end{proof}
If the data are flawed and $E^p+D^p>1$ - meaning the chance of either emigrating or dying is greater than 1.0 - then we need to follow an alternative approach, since the expression under the square root could become negative. We define
\begin{equation}
    D^p=\left\lbrace\begin{array}{cc}1.0,&D^p\geq E^p,\\
    \frac{2D^p}{D^p+E^p},&D^p<E^p
    \end{array}\right.,\quad E^p=\left\lbrace\begin{array}{cc}\frac{2E^p}{D^p+E^p},&D^p\geq E^p,\\
    1.0,&D^p<E^p.
    \end{array}\right.
\end{equation}
The choice is reasoned by the idea that we (a) want to guarantee that one of the events happen and (b) want to conserve the ratio between the observed probabilities. Let, without loss of generality, $D^p\geq E^p$, then we set $D^p=1$. So death will always trigger, if it is scheduled earlier than emigration. Let $x$ stand for the unknown a-prior probability for emigration, we find
\[P(\text{death})=P(\neg\text{emigration})+P(\text{d scheduled earlier than e})P(\text{emigration})=(1-x)+\frac{1}{2}x=1-\frac{x}{2}.\]
Analogously,
\[P(\text{emigration})=P(\text{e scheduled earlier than d})P(\text{emigration}) =\frac{1}{2}x.\]
It remains to solve
\[\frac{D^p}{E^p}\overset{!}{=}\frac{P(\text{death})}{P(\text{emigration})}=\frac{1-\frac{x}{2}}{\frac{x}{2}}=\frac{2}{x}-1.\]
which leads
\[E^p=x=\frac{2E^p}{D^p+E^p}.\]
The equation for $D^p$ follows analogous. Note that we do not need to change anything for the compensation factor $C$, besides clamping
\[X^p=\min(X^p/C,1.0).\]


\bibliographystyle{plainurl}  
\bibliography{References}

\appendix 
\section{Appendix}
\subsection{Synthetic Internal-Migration Mini-Case-Study}
\label{sec:minimodel}
In the following we introduce a tiny synthetic country to describe the ideas behind the Biregional, Interregional and Full Regional internal migration models introduced in Section \ref{sec:immodel}. To avoid problems due to stochasticity we establish simple deterministic and macroscopic mean-field analogues to the three models which behave like the microscopic versions on the mean value. Moreover, we neglect that individuals become older in the course of a year to allow computation of probabilities by simple divisions.

The study setup is defined as follows:
\begin{itemize}
    \item We define a fictional population of the country and define how many persons internally migrate between the different regions within a given year. This synthetic census will furthermore pose as a the ground truth. 
    \item Dependent on the model, different aspects of the ground truth will be known. Note that the Full Regional model parametrised with the perfectly known census will be able to fully reproduce it.
    \item In the next steps we compute the age-dependent internal emigration probability from the destination-aggregated census, the origin-destination flows from the age-aggregated census, and the internal immigration probabilities from the origin-aggregated census.
    \item We furthermore use these probabilities to evaluate the simulated internal migrants with the Biregional and the Interregional model and compare the outcomes with the census.
    \item We finally investigate ideas to combine the origin-destination flows and the internal immigration probabilities to a feasible parametrisation of the Full Regional model even without perfect knowledge of the census. We run the model and compare the outcomes with the census.
\end{itemize}

\subsubsection{Synthetic Census}
Our synthetic country is defined with three regions $A,B$ and $C$. The inhabitants are either $1$ or $2$ years old and we do not differentiate between sex.

We furthermore assume the following population
\begin{table}[H]
\centering
\begin{tabular}{|c|ccc|c|}
\hline
\multicolumn{5}{|c|}{Synthetic Census: Population}\\
    \hline
     \backslashbox{age}{region} & A & B & C & A+B+C\\
     \hline
     1 & $100$ & $200$ & $100$ & $400$\\
     2 & $200$ & $200$ & $100$ & $500$\\
     \hline
     1+2 & $300$ & $400$ & $200$ & $900$\\
     \hline
\end{tabular}
\caption{Synthetic Census: Population}
\label{tbl:synt_census_pop}
\end{table}

and the following internal migrations within the regarded year:

\begin{table}[H]
\centering
\begin{tabular}{|c|ccc|c|ccc|c|ccc|c|}
    \hline
    \multicolumn{13}{|c|}{Synthetic Census: Internal Migrants}\\
    \hline
    age & \multicolumn{4}{|c|}{1} & \multicolumn{4}{|c|}{2} & \multicolumn{4}{|c|}{1+2}\\
    \hline
     \backslashbox{from}{to} & A & B & C & A+B+C & A & B & C & A+B+C & A & B & C & A+B+C\\
     \hline
     A & $1$ & $5$ & $1$ & $7$& $2$ & $2$ & $10$& $14$ & $3$ & $7$ & $11$ & $21$\\
     B & $2$ & $2$ & $10$ & $14$& $10$ & $2$ & $2$& $14$ &$12$ & $4$ & $12$ & $28$\\
     C & $5$ & $1$ & $1$ & $7$ & $1$ & $5$ & $1$& $7$ & $6$ & $6$ & $2$ & $14$\\
     \hline
     A+B+C & $8$ & $8$ & $12$ & $28$ & $13$ & $9$ & $13$ & $35$ & $21$ & $17$ & $25$& $63$ \\
     \hline
\end{tabular}
\caption{Synthetic Census: Internal Migrants}
\label{tbl:synt_census_im}
\end{table}

We call these two tables \textit{synthetic census} and use them to compute probabilities required for modelling.

\subsubsection{Internal Emigration}
Dividing the number of emigrants per age and origin region (rows A, B, C and columns 1/A+B+C, 2/A+B+C in Table \ref{tbl:synt_census_im}) by the corresponding population (rows 1, 2 and columns A, B, C in Table \ref{tbl:synt_census_pop}), we get the following age-dependent emigration probabilities:

\begin{table}[H]
\centering
\begin{tabular}{|c|ccc|}
\hline
\multicolumn{4}{|c|}{Internal Emigration Probablity $IE^p$}\\
\hline
\backslashbox{age}{region}&A&B&C\\
\hline
1 & $7\%$ & $7\%$ & $7\%$\\
2 & $7\%$ & $7\%$ & $7\%$\\
\hline
\end{tabular}
\end{table}

We see, that the emigration probability is actually age and region independent. That means, all individuals have equal chance to emigrate. This was a deliberate choice in the study design, since internal emigration is not the (most) interesting process when comparing the three models.

\subsubsection{Interregional Model}

Dividing the nine values for the age aggregated migration census (rows A, B, C and columns 1+2/A, 1+2/B, 1+2/C in Table \ref{tbl:synt_census_im}) by the corresponding row-sum (rows A, B, C and column 1+2/A+B+C in Table \ref{tbl:synt_census_im}) we get the age-independent origin-destination probabilities for the Interregional model.
\begin{table}[H]
\centering
\begin{tabular}{|c|ccc|c|}
\hline
\multicolumn{5}{|c|}{Origin-Destination Probabilites $OD^p$}\\
\hline
\backslashbox{from}{to}&A&B&C&$\text{A}\vee\text{B}\vee\text{C}$\\
\hline
A & $14.286\%$ & $33.333\%$ & $52.381\%$& $100\%$\\
B & $42.857\%$ & $14.286\%$ & $42.857\%$& $100\%$\\
C & $42.857\%$ & $42.857\%$ & $14.286\%$& $100\%$\\
\hline
\end{tabular}
\end{table}

With $M^{ir}(i,a,j)=P(i,a)\dot IE^p(i,a)\cdot OD^p(i,j)$ we get the modelled internal migrants. Numbers matching the fictional census in Table \ref{tbl:synt_census_im} are written bold.

\begin{table}[H]
\centering
\begin{tabular}{|c|ccc|c|ccc|c|ccc|c|}
    \hline
    \multicolumn{13}{|c|}{Interregional Model: Modelled Internal Migrants}\\
    \hline
    age & \multicolumn{4}{|c|}{1} & \multicolumn{4}{|c|}{2} & \multicolumn{4}{|c|}{1+2}\\
    \hline
     \backslashbox{from}{to} & A & B & C & A+B+C & A & B & C & A+B+C & A & B & C & A+B+C\\
     \hline
     A & $1$ & $2.333$ & $3.667$ & $\mathbf{7}$ & $2$ & $4.667$ & $7.333$& $\mathbf{14}$ & $\mathbf{3}$ & $\mathbf{7}$ & $\mathbf{11}$ & $\mathbf{21}$\\
     B & $6$ & $2$ & $6$ & $\mathbf{14}$& $6$ & $2$ & $6$& $\mathbf{14}$ &$\mathbf{12}$ & $\mathbf{4}$ & $\mathbf{12}$ & $\mathbf{28}$\\
     C & $3$ & $3$ & $1$ & $\mathbf{7}$ & $3$ & $3$ & $1$& $\mathbf{7}$ & $\mathbf{6}$ & $\mathbf{6}$ & $\mathbf{2}$ & $\mathbf{14}$\\
     \hline
     A+B+C & $10$ & $7.333$ & $10.667$ & $\mathbf{28}$ & $11$ & $9.667$ & $14.333$ & $\mathbf{35}$ & $\mathbf{21}$ & $\mathbf{17}$ & $\mathbf{25}$& $\mathbf{63}$ \\
     \hline
\end{tabular}
\end{table}
We see, that the model outcome matches the fictional census for the age sums (1+2 columns) and for the region sums (A+B+C columns). The prior is explained by how the OD probabilities were gathered, the latter is explained by the age-dependent emigration probabilities.

\subsubsection{Biregional Model}

Dividing the origin-aggregated values for the three destination regions and the two age classes (row A+B+C and columns 1/A, 1/B, 1/C, 2/A, 2/B, 2/C in Table \ref{tbl:synt_census_im}) by the corresponding row-sum (row A+B+C and columns 1/A+B+C, 2/A+B+C in Table \ref{tbl:synt_census_im}) we get the age-dependent internal immigration probabilities $II$ for the Biregional model.
\begin{table}[H]
\centering
\begin{tabular}{|c|ccc|c|}
\hline
\multicolumn{5}{|c|}{Internal Immigration Probabilities $II^p$}\\
\hline
\backslashbox{age}{region}&A&B&C&$\text{A}\vee\text{B}\vee\text{C}$\\
\hline
1 & $28.571\%$ & $28.571\%$ & $42.858\%$ & $100\%$\\
2 & $37.143\%$ & $25.714\%$ & $37.143\%$ & $100\%$\\
\hline
\end{tabular}
\end{table}

With $M^{br}(i,a,j)=P(i,a)\cdot IE^p(i,a)\cdot II^p(j,a)$ we get the modelled internal migrants. Numbers matching the fictional census from Table \ref{tbl:synt_census_im} are written bold.

\begin{table}[H]
\centering
\begin{tabular}{|c|ccc|c|ccc|c|ccc|c|}
    \hline
    \multicolumn{13}{|c|}{Biregional Model: Modelled Internal Migrants}\\
    \hline
    age & \multicolumn{4}{|c|}{1} & \multicolumn{4}{|c|}{2} & \multicolumn{4}{|c|}{1+2}\\
    \hline
     \backslashbox{from}{to} & A & B & C & A+B+C & A & B & C & A+B+C & A & B & C & A+B+C\\
     \hline
     A & $2$ & $2$ & $3$ & $\mathbf{7}$& $5.2$ & $3.6$ & $5.2$& $\mathbf{14}$ & $7.2$ & $5.6$ & $8.2$ & $\mathbf{21}$\\
     B & $4$ & $4$ & $6$ & $\mathbf{14}$& $5.2$ & $3.6$ & $5.2$& $\mathbf{14}$ &$9.2$ & $7.6$ & $11.6$ & $\mathbf{28}$\\
     C & $2$ & $2$ & $3$ & $\mathbf{7}$ & $2.6$ & $1.8$ & $2.6$& $\mathbf{7}$ & $4.6$ & $3.8$ & $5.6$ & $\mathbf{14}$\\
     \hline
     A+B+C & $\mathbf{8}$ & $\mathbf{8}$ & $\mathbf{12}$ & $\mathbf{28}$ & $\mathbf{13}$ & $\mathbf{9}$ & $\mathbf{13}$ & $\mathbf{35}$ & $\mathbf{21}$ & $\mathbf{17}$ & $\mathbf{25}$& $\mathbf{63}$ \\
     \hline
\end{tabular}
\end{table}

Like the interregional model, the results match the census for the overall emigrants (A+B+C columns), which is a consequence of using the same internal emigration model. Compared to the interregional model, the results match for the overall age structure of the immigrated agents (A+B+C row), but the validity of the 1+2 columns, i.e. the overall flows, is lost.

\subsubsection{Full Regional Model}
Finally, we may compute the probabilities for the Full Regional model $II_2$ by dividing the individual data cells in Table \ref{tbl:synt_census_im} by their row sum (A+B+C columns).

\begin{table}[H]
\centering
\begin{tabular}{|c|ccc|c|ccc|c|}
    \hline
    \multicolumn{9}{|c|}{Internal Migration Probabilities $II_2^p$}\\
    \hline
    age & \multicolumn{4}{|c|}{1} & \multicolumn{4}{|c|}{2} \\
    \hline
     \backslashbox{from}{to} & A & B & C & A+B+C & A & B & C & A+B+C \\
     \hline
     A & $14.286\%$ & $71.428\%$ & $14.286\%$ & $100\%$& $14.286\%$ & $14.286\%$ & $71.428\%$& $100\%$\\
     B & $14.286\%$& $14.286\%$ & $71.428\%$ & $100\%$& $71.428\%$ & $14.286\%$ & $14.286\%$& $100\%$\\
     C & $71.428\%$ & $14.286\%$ & $14.286\%$ & $100\%$ & $14.286\%$ & $71.428\%$ & $14.286\%$& $100\%$\\
     \hline
\end{tabular}
\end{table}

With these probabilities, finally, the model results with $M(i,a,j)=P(i,a)\cdot IE^p(i,a)\cdot II_2^p(i,a,j)$ are identical with the synthetic census from Table \ref{tbl:synt_census_im}.

\begin{table}[H]
\centering
\begin{tabular}{|c|ccc|c|ccc|c|ccc|c|}
    \hline
    \multicolumn{13}{|c|}{Full Regional Model: Modelled Internal Migrants}\\
    \hline
    age & \multicolumn{4}{|c|}{1} & \multicolumn{4}{|c|}{2} & \multicolumn{4}{|c|}{1+2}\\
    \hline
     \backslashbox{from}{to} & A & B & C & A+B+C & A & B & C & A+B+C & A & B & C & A+B+C\\
     \hline
     A & $\mathbf{1}$ & $\mathbf{5}$ & $\mathbf{1}$ & $\mathbf{7}$& $\mathbf{2}$ & $\mathbf{2}$ & $\mathbf{10}$& $\mathbf{14}$ & $\mathbf{3}$ & $\mathbf{7}$ & $\mathbf{11}$ & $\mathbf{21}$\\
     B & $\mathbf{2}$ & $\mathbf{2}$ & $\mathbf{10}$ & $\mathbf{14}$& $\mathbf{10}$ & $\mathbf{2}$ & $\mathbf{2}$& $\mathbf{14}$ &$\mathbf{12}$ & $\mathbf{4}$ & $\mathbf{12}$ & $\mathbf{28}$\\
     C & $\mathbf{5}$ & $\mathbf{1}$ & $\mathbf{1}$ & $\mathbf{7}$ & $\mathbf{1}$ & $\mathbf{5}$ & $\mathbf{1}$& $\mathbf{7}$ & $\mathbf{6}$ & $\mathbf{6}$ & $\mathbf{2}$ & $\mathbf{14}$\\
     \hline
     A+B+C & $\mathbf{8}$ & $\mathbf{8}$ & $\mathbf{12}$ & $\mathbf{28}$ & $\mathbf{13}$ & $\mathbf{9}$ & $\mathbf{13}$ & $\mathbf{35}$ & $\mathbf{21}$ & $\mathbf{17}$ & $\mathbf{25}$& $\mathbf{63}$ \\
     \hline
\end{tabular}
\end{table}

\subsubsection{Model Comparison}
One of the key questions of this model comparison is, whether the probability $II$ from the Biregional model can somehow be combined with the probability $OD$ from the Interregional model, to be valid in both ``worlds'': the age-distribution of the immigrants (row A+B+C) and the overall flows (columns 1+2). This way, we could generate a well working migration model without knowing the full synthetic census or even the probability table $II_2^p$. 

Let $p(i,a,j)$ denote the probabilities of interest, then constraints would be written as:
\[
    \forall a\in \{1,2\},j\in\{A,B,C\}:\sum_{i\in \{A,B,C\}}P(i,a)\cdot IE(i,a)^p\cdot p(i,a,j)=\sum_{i\in \{A,B,C\}}P(i,a)\cdot IE^p(i,a)\cdot II^p(j,a),
\]
\[
    \forall i,j\in \{A,B,C\}:\sum_{a\in \{1,2\}}P(i,a)\cdot IE^p(i,a)\cdot p(i,a,j)=\sum_{a\in \{1,2\}}P(i,a)\cdot IE^p(i,a)\cdot OD^p(i,j).
\]
\[
    \forall a\in \{1,2\},i\in \{A,B,C\}:\sum_{j\in \{A,B,C\}}p(i,a,j)=1.
\]
We define $IE(i,a):=P(i,a)\cdot IE^p(i,a)$ and transform the equations to get a better picture:
\begin{equation}
    \forall j\in\{A,B,C\},a\in \{1,2\}:\sum_{i\in \{A,B,C\}}\frac{IE(i,a)}{\sum_{k\in \{A,B,C\}} IE(k,a)}\cdot p(i,a,j)=II^p(j,a),
\end{equation}
\begin{equation}
    \forall i,j\in \{A,B,C\}:\sum_{a\in \{1,2\}}\frac{IE(i,a)}{\sum_{b\in \{1,2\}} IE(i,b)}\cdot p(i,a,j)=OD^p(i,j).
\end{equation}
\begin{equation}
    \forall a\in \{1,2\},i\in \{A,B,C\}:\sum_{j\in \{A,B,C\}}p(i,a,j)=1.
\end{equation}
Additional constraint
\begin{equation}
    \forall a\in \{1,2\},i,j\in \{A,B,C\}:0\leq p(i,a,j)\leq 1
\end{equation}
must be met to justify the use of $p$ as probability. This equation systems seems solvable in form of a linear program: In this particular case we have $2\cdot3+3\cdot 3+2\cdot 3=21$ constraint equations and $3\cdot 3 \cdot 2=18$ degrees of freedom, in the typical case, i.e. with more age classes and regions, we usually receive by far more degrees of freedom than constraint equations.

There are various ways to tackle this problem, e.g using existing libraries for linear programming. Considering that the problem has $q_{im}\cdot (a_{max}+1)\cdot q_{im}$ free variables with $2q_{im}\cdot (a_{max}+1)+q_{im}^2$ constraint equations, which both might easily lie in the Millions for reasonable number of age classes and regions, an exact solution might become difficult and tailored heuristic approaches might be more suitable.

Below we see a Table with one solution for $p$, which was found with a tailored metaheuristic. The and corresponding model results are seen below. It is not by accident that the results are whole numbers, since the metaheuristic works with absolute numbers instead of probabilities.

\begin{table}[H]
\centering
\begin{tabular}{|c|ccc|c|ccc|c|}
    \hline
    \multicolumn{9}{|c|}{Estimated Internal Migration Probabilities $p$ from $II^p$ and $OD^p$}\\
    \hline
    age & \multicolumn{4}{|c|}{1} & \multicolumn{4}{|c|}{2} \\
    \hline
     \backslashbox{from}{to} & A & B & C & A+B+C & A & B & C & A+B+C \\
     \hline
     A & $14.286\%$ & $28.571\%$ & $57.143\%$ & $100\%$& $14.286\%$ & $35.714\%$ & $50\%$& $100\%$\\
     B & $35.714\%$& $14.286\%$ & $50\%$ & $100\%$& $50\%$ & $14.286\%$ & $35.714\%$& $100\%$\\
     C & $28.571\%$ & $57.143\%$ & $14.286\%$ & $100\%$ & $57.143\%$ & $28.571\%$ & $14.286\%$& $100\%$\\
     \hline
\end{tabular}
\end{table}

\begin{table}[H]
\centering
\begin{tabular}{|c|ccc|c|ccc|c|ccc|c|}
    \hline
    \multicolumn{13}{|c|}{Full Regional Model using $p$ as $II^2$: Modelled Internal Migrants}\\
    \hline
    age & \multicolumn{4}{|c|}{1} & \multicolumn{4}{|c|}{2} & \multicolumn{4}{|c|}{1+2}\\
    \hline
     \backslashbox{from}{to} & A & B & C & A+B+C & A & B & C & A+B+C & A & B & C & A+B+C\\
     \hline
     A & $1$ & $2$ & $4$ & $\mathbf{7}$ & $2$ & $5$ & $7$& $\mathbf{14}$ & $\mathbf{3}$ & $\mathbf{7}$ & $\mathbf{11}$ & $\mathbf{21}$\\
     B & $5$ & $2$ & $7$ & $\mathbf{14}$& $7$ & $2$ & $5$& $\mathbf{14}$ &$\mathbf{12}$ & $\mathbf{4}$ & $\mathbf{12}$ & $\mathbf{28}$\\
     C & $2$ & $4$ & $1$ & $\mathbf{7}$ & $4$ & $2$ & $1$& $\mathbf{7}$ & $\mathbf{6}$ & $\mathbf{6}$ & $\mathbf{2}$ & $\mathbf{14}$\\
     \hline
     A+B+C & $\mathbf{8}$ & $\mathbf{8}$ & $\mathbf{12}$ & $\mathbf{28}$ & $\mathbf{13}$ & $\mathbf{9}$ & $\mathbf{13}$ & $\mathbf{35}$ & $\mathbf{21}$ & $\mathbf{17}$ & $\mathbf{25}$& $\mathbf{63}$ \\
     \hline
\end{tabular}
\end{table}
Compared to Table \ref{tbl:synt_census_im}, the model-result fulfils the two required balance equations (all flows are correct for the 1+2 column, and the A+B+C row is correct for both ages 1 and 2), yet the age dependent flows between the individual regions still don't have very much in common with the original synthetic census, or the Full Regional model with the correct parameters respectively. Apparently, even for this minimalist example the problem is actually under-determined and multiple (infinite) solutions exist. Although this is intuitively quite clear, it is mathematically surprising: Due to the number of equations ($21$ equations for $18$ degrees of freedom) the task seemed over-determined the first glance.

\end{document}